%
%
%

\documentclass[12pt]{svmult}
\usepackage{amsfonts} 
\usepackage{amsmath}
\usepackage{bm}


\usepackage{mathptmx}       
\usepackage{helvet}         
\usepackage{courier}        
\usepackage{type1cm}        
%
\usepackage{makeidx}         
\usepackage{graphicx}        
\usepackage{multicol}        
\usepackage[bottom]{footmisc}


\makeindex             


\def \bell {\mbox{\boldmath $\ell$}}

\begin{document}
\title*{Symplectic Dilations, Gaussian States and Gaussian Channels}
\author{K. R. Parthasarathy}
\institute{K. R. Parthasarathy \at Indian Statistical Institute, Delhi 
Centre, 7, S. J. S. Sansanwal Marg, New Delhi 110016, 
\email{krp@isid.ac.in}}
%
%
\maketitle
\vskip0.1in
\begin{center}
{\it Dedicated to Professor Kalyan B. Sinha on his 70th birthday}
\end{center}
\vskip0.1in

\abstract{By elementary matrix algebra we show that every real $2n 
\times 2n$ 
matrix admits a dilation to an element of the real symplectic group $Sp 
(2(n+m))$ for some nonnegative integer $m.$ Our methods do not yield the 
minimum value  of $m,$ for which such a dilation is possible.\\
After listing some of the main properties of Gaussian states in $L^2 
(\mathbb{R}^n),$  we analyse the implications of symplectic dilations in the 
study of quantum Gaussian channels which lead to some interesting open 
 problems, particularly, in the context of the work of Heinosaari, Holevo and 
Wolf \cite{3}.}

\keywords{Symplectic matrix and dilation; Weyl operator; Gaussian state, 
symmetry operator and channel.}

\section{Symplectic Dilation Theorem}
\label{sec:1}

Throughout this section we shall deal with matrices having real entries. Denote 
by $M_n$ the real linear space of all $n \times n$ matrices. Write
\begin{eqnarray}
  J_2 &= \left [ \begin{array}{cc} 0 & -1 \\ 1 & 0 \end{array} \right ], 
 \nonumber \\
  J_{2n} &= \left [ \begin{array}{ccccc} J_2 & 0 & 0 & \ldots & 0 \\ 0 & J_2 & 
0 & \ldots & 0 \\ \ldots & \ldots & \ldots & \ldots & \ldots \\
0 & 0 & \ldots & 0 & J_2  
\end{array}\right ] \label{eq1.1}
\end{eqnarray}
where the right hand side is a block diagonal matrix with diagonal blocks $J_2$ 
and the nondiagonal entries are $2 \times 2$ zero matrices. Let
\begin{equation}
Sp (2n) = \left \{ L \left | L \in M_{2n},  \right .  L^{T} J_{2n} L = 
J_{2n}   \right \}.\label{eq1.2}
\end{equation}
be the real { \it symplectic Lie group} of order $2n,$ where the index $T$ 
stands for transpose. It is known that every element $L$ in $Sp (2n)$ has 
determinant unity and has the property that $L^T \in Sp (2n).$ Any element of 
$Sp (2n)$ will be called a {\it symplectic} matrix of order $2n.$

The importance of $Sp(2n)$ in quantum probability lies in the property that the 
linear transformation
$$L \left [\begin{array}{c} p_1 \\ q_1 \\ \vdots \\ p_n \\ q_n  \end{array} 
\right ]  = \left [\begin{array}{c} p_1^{\prime} \\ q_1^{\prime} \\ \vdots \\ 
p_n^{\prime} \\ q_n^{\prime}  \end{array} 
\right ] $$
of the canonical position and momentum observables in $L^2 (\mathbb{R}^n)$ 
preserves the Heisenberg canonical commutation relations (CCR) if and only if 
$L \in Sp(2n).$

The main aim of this section is to establish a dilation theorem according to 
which any real linear transformation $A$ of the canonical momentum and position 
observables $\{p_r, q_r, r =1, 2, \ldots, n \}$ can be dilated to a symplectic 
linear transformation of a larger system $\{p_r, q_r, r=1,2, \ldots, n+m \}$ of 
such canonical observables obeying CCR. This is the essence of the following 
theorem in linear algebra whose proof will be accomplished by examining several 
special cases.

\begin{theorem} \label{thm1.1}
 Let $A \in M_{2n}.$ Then there exists a nonnegative integer $m,$ not depending 
on $A$ and matrices $B, C, D$ of respective order $2n \times 2m,$ $2m \times 
2n,$ $2m \times 2m$ such that the block matrix
\begin{equation}
\widetilde{A} = \left [ \begin{array}{cc}   A & B \\ C & D \end{array}  
\right ] \label{eq1.3}
\end{equation}
is a symplectic matrix of order $2 (n+m).$

\end{theorem}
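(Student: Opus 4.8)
The condition $\widetilde{A}^T J_{2(n+m)} \widetilde{A} = J_{2(n+m)}$ is a system of quadratic matrix equations, so the natural strategy is to build $\widetilde A$ step by step, each time enlarging the ambient symplectic group just enough to kill one obstruction. Writing $J$ for the standard symplectic form, $L$ is symplectic iff $L^T J L = J$; equivalently $L J L^T = J$ and $L^{-1} = J^{-1} L^T J$. I would first reduce to a convenient normal form. Two reductions are available at no cost: (i) if $A_1 = S A S'$ with $S, S' \in Sp(2n)$, then a symplectic dilation of $A$ yields one of $A_1$ of the same size (conjugate the block dilation by $\mathrm{diag}(S, I_{2m})$ and $\mathrm{diag}(S', I_{2m})$, which are symplectic in $Sp(2(n+m))$); (ii) direct sums behave well — if $\widetilde{A_i}$ dilates $A_i$ then a suitable reshuffling of coordinates dilates $A_1 \oplus A_2$. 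So it suffices to treat a list of ``building-block'' matrices and then assemble.

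The key observation is a polar-type / singular-value reduction for the symplectic group. Any $A \in M_{2n}$ can be written using the symplectic SVD (Euler decomposition together with ordinary SVD) so that, up to multiplying on both sides by symplectic matrices, $A$ is reduced to a block-diagonal matrix whose blocks are $2\times 2$ of the simple form $\begin{bmatrix} \lambda & 0 \\ 0 & \mu \end{bmatrix}$ or $\begin{bmatrix} 0 & \lambda \\ \mu & 0\end{bmatrix}$ — essentially because $Sp(2)=SL(2,\mathbb R)$ acts transitively enough on $2\times 2$ matrices modulo these canonical forms. Hence, by the two reductions above, the whole problem collapses to: \emph{dilate a single $2\times 2$ matrix to a symplectic matrix of size $2(1+m)$ for a fixed $m$}. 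For a $2\times2$ block $a$, I want $\widetilde a = \begin{bmatrix} a & b \\ c & d\end{bmatrix} \in Sp(2(1+m))$. Scaling is the crux: given a real number $t$, the diagonal matrix with a $2\times2$ block $\mathrm{diag}(t, t^{-1})$ is already in $Sp(2)$, so ``stretching'' is free; the genuinely new phenomenon is producing a block like $\mathrm{diag}(t,t)$ with $t\neq \pm1$ (or, worse, a singular block), which cannot live in any $Sp(2)$ because symplectic matrices in dimension $2$ have determinant $1$. The resolution is to use one extra mode: in $Sp(4)$ one can realize $\mathrm{diag}(t, s, t^{-1}, s^{-1})$ freely, and by a symplectic change of basis mixing the two modes one obtains a $4\times4$ symplectic matrix whose top-left $2\times2$ corner is \emph{any} prescribed matrix of operator norm small enough; iterating / rescaling then removes the norm restriction because of reduction (i). I expect a single extra mode ($m=1$) may not suffice in general and the honest bookkeeping will show one needs $m = n$ (one auxiliary mode per original mode), matching the ``not depending on $A$'' clause — $m$ is determined by $n$ alone.

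Concretely, the steps I would carry out are: (1) prove the two reduction lemmas (symplectic-invariance on both sides, and additivity over direct sums); (2) establish the canonical form: every $A\in M_{2n}$ equals $S (\bigoplus_{j} a_j) S'$ with $S,S'\in Sp(2n)$ and each $a_j$ a $2\times2$ matrix of a short explicit list; (3) for each model $2\times2$ block exhibit, by bare hands, a symplectic dilation inside $Sp(4)$ — here one writes down the four blocks and verifies the three defining identities $A^TJA=J$ componentwise, choosing $B=a'$, $C=c'$, $D=d'$ so that the cross terms cancel; (4) reassemble via the direct-sum lemma, giving a dilation of $\bigoplus a_j$ inside $Sp(2n + 2n) = Sp(4n)$, i.e. $m=n$; (5) undo the reduction in step (2) by conjugating with the symplectic matrices $\mathrm{diag}(S^{-1},I)$ and $\mathrm{diag}(S'^{-1},I)$ (or their transposes), landing back at $A$.

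\medskip

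The main obstacle, and the place where I expect the argument to need real care rather than bookkeeping, is step (3) for a \emph{singular} or \emph{scalar} $2\times2$ block: since every element of $Sp(2)$ has determinant $1$, such a block has no dilation inside $Sp(2)$ at all, and one must genuinely spend an extra mode and check that the off-diagonal blocks $B, C$ can be chosen to absorb the ``missing'' symplectic area. Once the model cases are handled, everything else is the two soft reductions plus a direct sum, so the theorem follows with $m=n$; extracting the minimal $m$ is, as the abstract concedes, not something this method delivers.
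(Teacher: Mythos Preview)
Your reductions (i) and (ii) are correct and are exactly the paper's Lemma~1.3 (the $n=1$ case) and Lemma~1.5 (direct sums). The gap is in step~(2): the ``symplectic SVD'' you invoke does not exist. That is, a general $A\in M_{2n}$ \emph{cannot} be written as $S\bigl(\bigoplus_{j=1}^{n} a_j\bigr)S'$ with $S,S'\in Sp(2n)$ and $a_j\in M_2$. Here is an invariant that obstructs it. If such a factorisation held, then using $S^T J_{2n} S=J_{2n}$ one gets
\[
J_{2n}^{-1}A^{T}J_{2n}A \;=\; S'^{-1}\Bigl(\bigoplus_{j}(\det a_j)\,I_2\Bigr)S',
\]
so $J_{2n}^{-1}A^{T}J_{2n}A$ would be diagonalisable over $\mathbb{R}$ with each eigenvalue of even multiplicity. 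But already for $n=2$ and $A=I_4+E_{13}$ (a single $1$ in the $(1,3)$ slot) one computes $J_4^{-1}A^{T}J_4A=I_4+N$ with $N$ nilpotent of rank~$2$, hence \emph{not} diagonalisable. Neither the Euler/Bloch--Messiah decomposition (which applies only to symplectic matrices) nor the ordinary SVD (whose orthogonal factors are not symplectic in general) produces the normal form you need, and the counterexample shows no such form can be produced by other means. Consequently the clean bound $m=n$ your outline would give is not available by this route.

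The paper avoids this obstruction by abandoning any multiplicative normal form for $n>1$ and using an \emph{additive} decomposition instead: $A$ is split as a symmetric part (handled via Williamson's theorem, Lemma~1.7), a block-diagonal skew part, and an off-block-diagonal skew part (handled via explicit $4\times4$ and $8\times8$ dilations together with an edge-colouring of the complete graph $K_n$). The pieces are then recombined by a separate lemma (Lemma~1.6) that dilates any convex combination $\sum p_iA_i$ once each $A_i$ has been dilated, at the cost of multiplying the ambient dimension by the number of summands. This yields $m$ of order $n^2$ rather than $n$. Your steps (1), (3) for the four model $2\times2$ blocks, and (4) are all fine and appear in the paper; it is only the block-diagonalisation step that fails and forces the more elaborate additive argument.
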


\begin{lemma} \label{lem1.2}
 The following matrices of order $4 \times 4$ are symplectic.
\begin{itemize}
 \item[(i)] $\left [\begin{array}{cc} 0 & B \\ C & 0 \end{array}\right ]$ where 
$B,C \in SL (2, \mathbb{R}).$
\item[(ii)] $\left [\begin{array}{cccc} \alpha & 0 & 0 & x \\ 0&0& y &0 \\ 0 & 
x & \alpha & 0 \\ y & 0 & 0 & 0  \end{array}\right ]$ where $\alpha \neq 0,$ 
$1+xy = 0.$
\item[(iii)]$\left [\begin{array}{cccc} \alpha & 0 & x &  0\\ 0& \alpha& 0 &y 
\\ -y & 0 & \alpha & 0 \\ 0 & -x & 0 & \alpha  \end{array}\right ]$ where $xy = 
1 - \alpha^2.$
\item[(iv)]$\left [\begin{array}{cccc}\alpha & 0 & x &  0\\ 0& -\alpha& 0 &y 
\\ y & 0 & \alpha & 0 \\ 0 & x & 0 & -\alpha  \end{array}\right ]$ where $xy 
= 1+\alpha^2.$
\end{itemize}
\end{lemma}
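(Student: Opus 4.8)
The plan is a direct verification of the symplectic identity $L^{T}J_{4}L = J_{4}$ for each of the four matrices, organised so as to avoid a brute-force $16$-entry multiplication. Throughout, $J_{4} = \mathrm{diag}(J_{2},J_{2})$ as in $(\ref{eq1.1})$, and I write $\omega(u,v) = u^{T}J_{4}v$ for the associated bilinear form. Then the symplectic condition on $L$ is exactly that the columns $v_{1},v_{2},v_{3},v_{4}$ of $L$ satisfy $\omega(v_{1},v_{2}) = \omega(v_{3},v_{4}) = -1$ and $\omega(v_{i},v_{j}) = 0$ for the remaining two-element subsets $\{i<j\}$; the lower-triangular entries and the vanishing diagonal of $L^{T}J_{4}L$ are then automatic from the antisymmetry of $J_{4}$, since $(L^{T}J_{4}L)_{ij} = \omega(v_{i},v_{j})$.

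For part (i) I would first record the elementary identity $M^{T}J_{2}M = (\det M)\,J_{2}$, valid for every $M \in M_{2}$ (a one-line computation). Writing the given matrix in $2\times 2$ blocks as $\left[\begin{array}{cc} 0 & B \\ C & 0\end{array}\right]$ and using $J_{4} = \mathrm{diag}(J_{2},J_{2})$, a block multiplication yields $L^{T}J_{4}L = \mathrm{diag}\!\left(C^{T}J_{2}C,\ B^{T}J_{2}B\right) = \mathrm{diag}\!\left((\det C)J_{2},\ (\det B)J_{2}\right)$, which equals $J_{4}$ since $B,C \in SL(2,\mathbb{R})$.

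For parts (ii)--(iv) I would simply list, for each matrix, the six numbers $\omega(v_{i},v_{j})$ with $1 \le i < j \le 4$. The point is that in each of these three matrices every column has at most two nonzero entries, so each $\omega(v_{i},v_{j})$ is a sum of at most two products and the check is immediate. The two ``diagonal-block'' pairings $\omega(v_{1},v_{2})$ and $\omega(v_{3},v_{4})$ are exactly where the stated constraints enter --- $1+xy=0$ in (ii), $xy = 1-\alpha^{2}$ in (iii), $xy = 1+\alpha^{2}$ in (iv), each time producing the value $-1$ --- while the four off-block pairings vanish, either because the supports of $v_{i}$ and $v_{j}$ are $\omega$-orthogonal or by an explicit cancellation such as $\alpha y$ against $-\alpha y$.

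There is no substantive obstacle here; the only thing to be careful about is the convention, namely that the coordinates are ordered $(p_{1},q_{1},p_{2},q_{2})$ and $J_{4}$ has the block form $(\ref{eq1.1})$, so that $\omega(e_{1},e_{2}) = -1$ and $\omega(e_{2},e_{1}) = +1$ within each canonical pair; getting these signs right is what makes the ``$-1$'' come out on the anti-diagonal of each $J_{2}$ block. Once that is fixed, the lemma reduces to the bookkeeping described above, and it will then feed the case analysis proving Theorem~\ref{thm1.1}.
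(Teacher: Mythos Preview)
Your proposal is correct and is exactly what the paper does: its entire proof is the single line ``Straightforward verification.'' Your column-by-column organisation via the pairings $\omega(v_i,v_j)$ is a clean way to carry out that verification, and your sign conventions and use of $M^{T}J_{2}M=(\det M)J_{2}$ for part (i) are all in order.
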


\begin{proof}
 Straightforward verificaion. \hfill{$\qed$}
\end{proof}

\begin{lemma} \label{lem1.3}
 Theorem \ref{thm1.1} holds for $n=1$ with $m=1.$
\end{lemma}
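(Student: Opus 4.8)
The plan is to translate the requirement $\widetilde A\in Sp(4)$ into a short list of scalar and matrix equations and solve them by inspection, using two facts: $Sp(2)=SL(2,\mathbb{R})$, and every real $2\times 2$ matrix $M$ satisfies $M^{T}J_{2}M=(\det M)\,J_{2}$.

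First I would dispose of the trivial case $\det A=1$: then $A\in Sp(2)$, so with $m=1$, $B=C=0$ and $D=I_{2}$ the block matrix $\widetilde A=\mathrm{diag}(A,I_{2})$ lies in $Sp(4)$, because $J_{4}$ is block diagonal with diagonal blocks $J_{2}$. Now suppose $\det A\neq 1$. Writing $\widetilde A^{T}J_{4}\widetilde A=J_{4}$ in $2\times 2$ blocks and using $M^{T}J_{2}M=(\det M)J_{2}$ on the diagonal blocks, the symplectic condition reduces to
\[
\det A+\det C=1,\qquad \det B+\det D=1,\qquad A^{T}J_{2}B+C^{T}J_{2}D=0,
\]
the fourth block equation being equivalent to the last one. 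I would then simply choose $B=C=\mathrm{diag}(1,\,1-\det A)$, which are invertible since $\det A\neq 1$ and have determinant $1-\det A$, so the first equation holds. Because $C$ is invertible, the coupling equation determines $D$ uniquely, namely $D=\tfrac{1}{\det C}\,CJ_{2}A^{T}J_{2}B$, and a one-line determinant computation (using $\det J_{2}=1$) gives $\det D=\dfrac{\det A\,\det B}{\det C}=\det A$. Hence $\det B+\det D=(1-\det A)+\det A=1$, so the second equation holds as well, and $\widetilde A\in Sp(4)$ with $m=1$.

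The step I expect to carry the real content is checking that the determinant constraints are automatically compatible with the off-diagonal coupling constraint. A priori one fears that, after fixing $B$ and $C$, solving the coupling equation for $D$ will yield the ``wrong'' determinant; but $\det D$ comes out equal to $\det A\,\det B/\det C$, which collapses to $\det A$ precisely because we arranged $\det B=\det C$, and then $\det B+\det D=1$ for free. The hypothesis $\det A\neq 1$ enters only to keep $C$ invertible, and the excluded value is exactly the case in which $A$ is already symplectic.

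An alternative route — probably the one matching the structure of Lemma~\ref{lem1.2} — is to first bring $A$ to diagonal form by the singular value decomposition $A=k_{1}\,\mathrm{diag}(a,b)\,k_{2}$ with $k_{1},k_{2}\in SO(2)\subset Sp(2)$, note that conjugating a symplectic dilation of $\mathrm{diag}(a,b)$ by $\mathrm{diag}(k_{1},I_{2})$ on the left and $\mathrm{diag}(k_{2},I_{2})$ on the right again produces a symplectic matrix whose top-left block is $A$, and then dilate the diagonal matrix using the explicit $4\times 4$ forms listed in Lemma~\ref{lem1.2}. The direct computation above, however, seems shortest and makes the role of $\det A$ transparent.
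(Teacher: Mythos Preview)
Your main argument is correct and is genuinely different from the paper's. The paper proceeds by first observing that pre- and post-multiplication by elements of $Sp(2)=SL(2,\mathbb{R})$ preserves the dilation property, and then uses the $SL(2,\mathbb{R})$--$SL(2,\mathbb{R})$ normal form to reduce an arbitrary $2\times 2$ matrix to one of the four shapes $0$, $\operatorname{diag}(\alpha,0)$, $\alpha I_{2}$, $\operatorname{diag}(\alpha,-\alpha)$, each of which is dilated by one of the explicit $4\times 4$ matrices in Lemma~\ref{lem1.2}. Your route bypasses both the normal form and Lemma~\ref{lem1.2}: you translate $\widetilde A\in Sp(4)$ into the three block conditions $\det A+\det C=1$, $\det B+\det D=1$, $A^{T}J_{2}B+C^{T}J_{2}D=0$ via the identity $M^{T}J_{2}M=(\det M)J_{2}$, and then exhibit a single explicit solution $B=C=\operatorname{diag}(1,1-\det A)$, $D=(\det C)^{-1}CJ_{2}A^{T}J_{2}B$ valid for all $A$ with $\det A\neq 1$. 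The determinant bookkeeping you flag as the crux is indeed correct: $\det D=\det A\det B/\det C=\det A$, so $\det B+\det D=1$ automatically. This is a cleaner and more uniform construction than the paper's case analysis, and it makes visible that $\det A$ is the only invariant that matters.

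One small caveat on your ``alternative route'': the singular value decomposition with $k_{1},k_{2}\in SO(2)$ yields a general diagonal $\operatorname{diag}(a,b)$, which is not directly one of the four shapes handled by Lemma~\ref{lem1.2}; the paper's reduction uses the full $SL(2,\mathbb{R})$ action on both sides, not just $SO(2)$, precisely to force $b\in\{0,\pm a\}$. So that sketch would need an extra step before invoking Lemma~\ref{lem1.2}. Since your direct construction already does the job, this does not affect the proof.
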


\begin{proof}
We make a general observation that if an $2n \times 2n$ matrix $A$ satisfies 
Theorem  \ref{thm1.1}, then so does any matrix $L_1 A L_2$ where $L_1$ and 
$L_2$ are arbitrary elements of $Sp (2n).$ Since $Sp(2) = SL (2, \mathbb{R})$ 
and for any $2 \times 2$ matrix $A$ there exist $L_1, L_2$ in $SL (2, 
\mathbb{R})$ such that $L_1 AL_2$  has one of the following forms:
$$\left [ \begin{array}{cc} 0 & 0 \\ 0 & 0 \end{array} \right ],\left [ 
\begin{array}{cc} \alpha & 0 \\ 0 & 0 \end{array} \right ], \left [ 
\begin{array}{cc} \alpha & 0 \\ 0 & \alpha \end{array} \right ], \left [ 
\begin{array}{cc} \alpha & 0 \\ 0 & -\alpha \end{array} \right ] $$
where $\alpha \neq 0$ is a real scalar. By Lemma \ref{lem1.2} each of the four 
matrices above satisfies Theorem \ref{thm1.1} with $m=1.$ \hfill{$\qed$}
\end{proof}

\begin{definition}\label{def1.4}
Let   $A \in M_{2n}.$ If there exists a matrix $\widetilde{A} \in Sp (2 (n+m))$
such that (\ref{eq1.3})  holds we say that $\widetilde{A}$ is a  {\it 
symplectic dilation} of $A$ of {\it order} $2 (n+m).$
\end{definition}

\begin{lemma}\label{lem1.5}
If $A_i \in M_{2n_{i}}$  admits a symplectic dilation of order $2(n_i+m_i),$ 
$i=1,2,\ldots,k$ then their direct sum
$$\underset{i=1}{\bigoplus^{k}}\,\,\, A_i = \left [ \begin{array}{ccccc} A_1 & 
0 & 0 & \ldots & 0 \\ 0 & A_2 & 0 & \ldots & 0 \\ \ldots & \ldots & \ldots & 
\ldots & \ldots \\ 0 & 0 & 0 & \ldots & A_k \end{array} \right ]$$
admits a symplectic dilation of order $2 \sum\limits_{i=1}^{k} (n_i+m_i).$ 
\end{lemma}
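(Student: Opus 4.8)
The plan is to obtain the case of general $k$ from the case $k=2$ by a trivial induction (group $A_1,\dots,A_{k-1}$ together first), so it suffices to describe the two‑summand step. First I would write the two given dilations in the block form of (\ref{eq1.3}),
\[
\widetilde A_i=\left[\begin{array}{cc}A_i & B_i\\ C_i & D_i\end{array}\right]\in Sp\big(2(n_i+m_i)\big),\qquad i=1,2,
\]
and form the direct sum $\widetilde A_1\oplus\widetilde A_2$. Since $J_{2N}$ is block diagonal with every diagonal block equal to $J_2$, one has $J_{2(n_1+m_1)}\oplus J_{2(n_2+m_2)}=J_{2(n_1+m_1+n_2+m_2)}$, so a direct sum of symplectic matrices is again symplectic; hence $\widetilde A_1\oplus\widetilde A_2\in Sp\big(2(n_1+m_1+n_2+m_2)\big)$. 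The only defect is that in this matrix the blocks $A_1,A_2$ and the ancillary blocks $B_i,C_i,D_i$ are interleaved rather than collected in two corners.

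The second ingredient repairs this. Any permutation of the canonical pairs $(p_r,q_r)$ — equivalently, any permutation of the $2\times 2$ diagonal blocks of the coordinate vector — is implemented by a permutation matrix $\Pi$ which, again because all diagonal blocks of $J$ coincide, satisfies $\Pi^{T}J\Pi=J$ and therefore lies in $Sp$. I would take $\Pi$ to be the relabelling that lists first the $n_1$ "system" modes of $\widetilde A_1$, then the $n_2$ system modes of $\widetilde A_2$, then the $m_1$ ancilla modes of $\widetilde A_1$, then the $m_2$ ancilla modes of $\widetilde A_2$. Then $\Pi^{T}(\widetilde A_1\oplus\widetilde A_2)\Pi$ is a product of symplectic matrices, hence symplectic (the symplectic group being closed under products), and by construction its top‑left $2(n_1+n_2)\times 2(n_1+n_2)$ corner is exactly $A_1\oplus A_2$. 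Reading off the remaining blocks produces $B,C,D$ of sizes $2n\times 2m$, $2m\times 2n$, $2m\times 2m$ with $n=n_1+n_2$, $m=m_1+m_2$, which is precisely a symplectic dilation of $A_1\oplus A_2$ of order $2(n_1+m_1+n_2+m_2)$.

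The main point to watch — and essentially the only nonroutine one — is the bookkeeping of $\Pi$: one must check that it preserves the internal order of the system modes (so $A_1$ and $A_2$ land in the corner in the correct order, with no spurious cross terms) and likewise for the ancilla modes, and that the off‑diagonal blocks then have the stated shapes. Once the two‑summand step is established, induction on $k$ immediately gives a symplectic dilation of $\bigoplus_{i=1}^{k}A_i$ with $m=\sum_{i=1}^{k}m_i$, i.e. of order $2\sum_{i=1}^{k}(n_i+m_i)$; alternatively one may bypass the induction altogether and apply a single mode‑permutation to $\bigoplus_{i=1}^{k}\widetilde A_i$ that gathers all $\sum_i n_i$ system modes to the front.
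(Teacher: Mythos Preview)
Your proposal is correct and follows essentially the same route as the paper: reduce to $k=2$, form the symplectic direct sum $\widetilde A_1\oplus\widetilde A_2$, and then conjugate by a symplectic permutation of the mode pairs to bring $A_1\oplus A_2$ into the top-left corner. The paper phrases the last step tersely as ``interchange rows 2 and 3 followed by an interchange of columns 2 and 3'' (in the $4\times 4$ block picture), which is exactly your mode-reordering $\Pi$; your version simply makes explicit why that permutation lies in $Sp$.
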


\begin{proof}
 It is enough to prove for the case $k=2.$ Let
$$\widetilde{A}_i = \left [\begin{array}{cc} A_i & B_i \\ C_i & D_i \end{array} 
\right ] $$
be a symplectic dilation of $A_i$ of order $2 (n_i+m_i), i =1,2.$ Then
$$\widetilde{A}_1 \oplus \widetilde{A}_2 = \left [ \begin{array}{cccc}  A_1 & 
B_1 & 0 & 0 \\ C_1 & D_1 & 0 & 0 \\ 0 & 0 & A_2 & B_2 \\0 & 0 & C_2 & D_2    
\end{array} \right ] $$
is symplectic. First, interchange rows 2 and 3 followed by an interchange of 
columns 2 and 3 in order to obtain the symplectic matrix 
$$\left [ \begin{array}{cccc} A_1 & 0 & B_1 & 0 \\ 0 & A_2 &  0 & B_2 \\ C_1 & 0 
& D_1 & 0 \\ 0 & C_2 & 0 & D_2  \end{array} \right ]$$
of order $2(n_1 + m_1 + n_2 + m_2).$ This is clearly a symplectic dilation of 
$A_1 \oplus A_2.$ \hfill{$\qed$}
\end{proof}

\begin{lemma}\label{lem1.6}
Let $A_i,$ $1 \leq i \leq k$ be elements of $M_{2n},$ admitting symplectic 
dilations $\widetilde{A}_i$  of order $2(n+m)$  for each $i.$ If $p_i > 0,$ $1 
\leq i \leq k$ and $\sum\limits_{i=1}^{k} p_i = 1,$ then $\sum\limits_{i=1}^{k} 
p_i A_i$ admits a symplectic dilation of order $2k(n+m).$
\end{lemma}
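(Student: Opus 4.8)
The plan is to exploit the convexity of the dilation construction by embedding the convex combination $\sum_{i=1}^k p_i A_i$ as a compression of the direct sum $\bigoplus_{i=1}^k \widetilde{A}_i$, which is symplectic of order $2k(n+m)$ by Lemma \ref{lem1.5}. First I would introduce the $2n \times 2n$ scalar blocks $\sqrt{p_i}\, I_{2n}$ and assemble the ``weighting'' operator that sends the $k$ copies of $\mathbb{R}^{2n}$ to a single copy. Concretely, one checks that the map $R = \begin{bmatrix} \sqrt{p_1}\,I_{2n} & \sqrt{p_2}\,I_{2n} & \cdots & \sqrt{p_k}\,I_{2n} \end{bmatrix}$ satisfies $R\,J_{2kn}\,R^{T} = (\sum_i p_i) J_{2n} = J_{2n}$, because $J_{2kn} = \bigoplus_{i=1}^k J_{2n}$ commutes appropriately with the scalar blocks. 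Hence $R$ can be completed to an element $S \in Sp(2kn)$ (a symplectic matrix whose first $2n$ rows are $R$); alternatively one argues that $R$ is the top block of the symplectic ``beam-splitter'' type matrix built from the $\sqrt{p_i}$, $\sqrt{1-p_i}$ entries, which is elementary $2\times 2$ block algebra in the spirit of Lemma \ref{lem1.2}.

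The key step is then the following multiplicative closure property, which I would state and use explicitly: if $M \in Sp(2N)$ and $S_1, S_2 \in Sp(2N)$, then $S_1 M S_2 \in Sp(2N)$, and more importantly, compressing a symplectic matrix by conjugation with an ``isometric'' symplectic embedding preserves the dilation property. The precise assertion I need is that the $(1,1)$-block of $S\,(\bigoplus_i \widetilde{A}_i)\,S^{T}$, where $S$ extends $R$ as above and we pad $\bigoplus_i \widetilde A_i$ suitably, equals $\sum_i p_i A_i$ up to the action of symplectic matrices on the left and right — and by the general observation recorded in the proof of Lemma \ref{lem1.3}, pre- and post-multiplying $\sum_i p_i A_i$ by symplectic matrices does not affect dilatability. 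I would therefore carry out the block multiplication $S\,\big(\bigoplus_{i=1}^k \widetilde{A}_i\big)\,S^{T}$ and read off that its upper-left $2n\times 2n$ corner is $\sum_{i=1}^k p_i A_i$ (the cross terms $\sqrt{p_i p_j}$ with $i\neq j$ land in off-diagonal blocks, which become the $B, C, D$ blocks of the dilation).

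The one subtlety — and the main obstacle — is bookkeeping the block structure so that everything lives in the \emph{same} symplectic group $Sp(2k(n+m))$: the matrices $\widetilde{A}_i$ have size $2(n+m)$, their direct sum has size $2k(n+m)$, and the weighting operator $R$ must act only on the ``$A$-part'' (the first $2n$ coordinates of each block) while acting trivially on the ``ancilla part'' (the remaining $2m$ coordinates of each block). So I would first permute coordinates (a permutation that is symplectic, exactly as in the proof of Lemma \ref{lem1.5}) to group together all the ``$A$-coordinates'' and all the ``ancilla coordinates,'' then let $S = S_0 \oplus I_{2(k(n+m)-n)}$ where $S_0 \in Sp(2(k n))$ extends $R$, and verify the corner computation. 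Since each ingredient — the permutation, the extension of $R$, the direct sum — is symplectic, the composite is symplectic, and its $(1,1)$-block of size $2n$ is (a symplectic image of) $\sum_i p_i A_i$; invoking the observation from Lemma \ref{lem1.3} finishes the proof. I expect the verification that $R$ extends to a symplectic matrix to be the only place needing more than routine manipulation, and even there the $2\times 2$ block model of Lemma \ref{lem1.2}(i) (with $B,C$ built from the $\sqrt{p_i}$) should do it.
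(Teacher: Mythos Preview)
Your core idea coincides with the paper's: form the symplectic direct sum $L=\bigoplus_{i=1}^k\widetilde A_i\in Sp(2k(n+m))$, then conjugate by a symplectic matrix built from an orthogonal $k\times k$ matrix whose first row is $(\sqrt{p_1},\ldots,\sqrt{p_k})$, and read off $\sum_i p_iA_i$ in the top-left $2n\times 2n$ corner. So the proposal is correct in spirit.

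Where you diverge from the paper is in the bookkeeping, and you have made it harder than necessary. The paper takes $S=((s_{ij}\,I_{2(n+m)}))$, i.e.\ the orthogonal mixing acts on the \emph{full} $2(n+m)$-dimensional blocks, not just on the $2n$-dimensional ``$A$-parts.'' This $S$ is automatically symplectic (orthogonal $\otimes$ identity), and the $(1,1)$ block of $SLS^{T}$ of size $2(n+m)$ is exactly $\sum_i p_i\widetilde A_i$, whose own top-left $2n\times 2n$ corner is $\sum_i p_iA_i$ on the nose. No permutation is needed, no separation of $A$-coordinates from ancilla coordinates, no appeal to ``up to symplectic equivalence'' via the observation in Lemma~\ref{lem1.3}, and no worry about extending $R$ to a symplectic matrix --- the orthogonal completion does it for free at the level of the full blocks.

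Your route of permuting, acting only on the $A$-parts via $S_0\in Sp(2kn)$, and padding with an identity also works in principle, but note the dimension slip: you wrote $S=S_0\oplus I_{2(k(n+m)-n)}$ with $S_0\in Sp(2kn)$, which has total size $4kn+2km-2n\neq 2k(n+m)$; you presumably meant $S=S_0\oplus I_{2km}$. Once that is fixed, the top-left corner really is $\sum_i p_iA_i$ exactly (not merely a symplectic image of it), so the invocation of Lemma~\ref{lem1.3} is superfluous. In short: right idea, but the paper's version with $I_{2(n+m)}$ in place of $I_{2n}$ removes every one of the ``subtleties'' you flagged.
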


\begin{proof}
 Consider the symplectic matrix $L = \underset{i=1}{\bigoplus^{k}} 
\,\widetilde{A}_i$ of order $2k(n+m).$ Let $((s_{ij}))$ be a real orthogonal 
matrix with its first row equal to $\left (\sqrt{p_1}, \sqrt{p_2}, 
\ldots,\sqrt{p_k}  \right ).$ Then
$$S = (( s_{ij} \,\,I_{2(n+m)}))$$
where $I_r$ denotes identity matrix of order $r,$ is a symplectic matrix of 
order $2(n+m)k.$ Thus $SLS^T$ is also symplectic. Considering this as a block 
matrix of the form (\ref{eq1.3}) we see that $SLS^T$ is a symplectic dilation 
of $\sum\limits_{i=1}^{k} p_i A_i.$ \hfill{$\qed$}
\end{proof}

\begin{lemma}\label{lem1.7}
Let $A$ be a real strictly positive definite matrix of order $2n.$ Then $A$ 
admits a symplectic dilation of order $4n.$
\end{lemma}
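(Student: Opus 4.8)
\noindent\emph{Proof sketch.} The plan is to reduce the general strictly positive definite $A$ to the simplest possible normal form by acting with symplectic matrices on both sides, and then to dilate that normal form mode by mode using the $4\times 4$ examples of Lemma~\ref{lem1.2}.

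First I would invoke Williamson's normal form theorem: being real, symmetric and strictly positive definite of even order $2n$, the matrix $A$ can be written as
\begin{equation*}
A = L^{T}\Delta L, \qquad \Delta = \bigoplus_{i=1}^{n} d_i I_2,
\end{equation*}
where $L \in Sp(2n)$, the $d_i$ are positive real numbers (the symplectic eigenvalues of $A$), and $\Delta$ is the block diagonal matrix whose $i$-th diagonal block is $d_i I_2$. By the general observation recorded in the proof of Lemma~\ref{lem1.3} — conjugating a symplectic dilation $\widetilde{\Delta}$ of $\Delta$ by $L^{T}\oplus I_{2n}$ on the left and $L\oplus I_{2n}$ on the right produces a symplectic matrix of the same order whose top left $2n\times 2n$ block is $L^{T}\Delta L$ — it suffices to exhibit a symplectic dilation of $\Delta$ of order $4n$.

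Next I would dilate $\Delta$ one mode at a time. Each diagonal block $d_i I_2$ is a $2\times 2$ matrix with $d_i\neq 0$, so by Lemma~\ref{lem1.3} it admits a symplectic dilation of order $4$; concretely, this dilation is the matrix of Lemma~\ref{lem1.2}(iii) with $\alpha=d_i$, taking $x=y=\sqrt{1-d_i^{2}}$ when $d_i\le 1$ and $x=-y=\sqrt{d_i^{2}-1}$ when $d_i>1$, so that $xy=1-d_i^{2}$ and the top left $2\times 2$ block is exactly $d_i I_2$. Applying Lemma~\ref{lem1.5} to the direct sum $\Delta=\bigoplus_{i=1}^{n} d_i I_2$ then yields a symplectic dilation of $\Delta$ of order $\sum_{i=1}^{n}4=4n$. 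Combining this with the reduction of the previous paragraph produces a symplectic dilation of $A$ of order $4n$, which is the assertion of the lemma.

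The only ingredient not already available in the excerpt is Williamson's theorem, so that is where the substance of the argument lies; once the normal form $\Delta=\bigoplus_i d_i I_2$ is in hand, everything else is bookkeeping with Lemmas~\ref{lem1.2}, \ref{lem1.3} and~\ref{lem1.5}. The one point to keep straight is the coordinate convention: with the interleaved ordering $(p_1,q_1,\dots,p_n,q_n)$ and the block diagonal $J_{2n}$ used here, Williamson's normal form really is $\bigoplus_i d_i I_2$ rather than $\mathrm{diag}(d_1,\dots,d_n,d_1,\dots,d_n)$, and it is precisely this mode-by-mode block structure that makes the application of Lemma~\ref{lem1.5} legitimate.
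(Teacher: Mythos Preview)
Your proof is correct and follows essentially the same route as the paper: invoke Williamson's theorem to reduce to $\bigoplus_i d_i I_2$, dilate each $2\times 2$ block via Lemma~\ref{lem1.2}(iii), and assemble the pieces with Lemma~\ref{lem1.5}, using the $Sp(2n)$-invariance observation from Lemma~\ref{lem1.3} to pass back to $A$. The only differences are cosmetic --- you write $A=L^T\Delta L$ where the paper writes $L^TAL=\Delta$, and you supply a few extra details (the explicit $x,y$ and the coordinate-convention remark) that the paper leaves implicit.
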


\begin{proof}
By Williamson's theorem \cite{1}, \cite{10} there exists a symplectic matrix 
$L$ 
of order $2n$ such that
$$L^T A L = \kappa_1 I_2 \oplus \kappa_2 I_2 \oplus \cdots \oplus \kappa_n I_2 $$
where $I_2$ is the identity matrix of order $2$ and $\kappa_1 \ge \kappa_2 \ge \cdots \ge 
\kappa_n > 0$ are the unique Williamson parameters of $A.$ By expression (iii) in 
Lemma \ref{lem1.2}, each $\kappa_j I_2$ has a symplectic dilation of order $4.$ Now 
Lemma \ref{lem1.5} implies that $L^TAL$ has a symplectic dilation of order 
$4n.$ \hfill{$\qed$} 
\end{proof}

\begin{lemma}\label{lem1.8}
Let $A$ be a symmetric matrix of order $2n.$ Then $A$ admits a symplectic 
dilation of order $8n.$ 
\end{lemma}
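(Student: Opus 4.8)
The plan is to reduce an arbitrary symmetric matrix to the strictly positive definite case already handled in Lemma~\ref{lem1.7}, at the cost of a bounded increase in the dilation order. The key observation is that any symmetric $A$ of order $2n$ can be written as a difference $A = A_{+} - A_{-}$, or more conveniently as a suitable affine combination, of strictly positive definite matrices of the same order. Concretely, pick a real number $\lambda > 0$ large enough that $A + \lambda I_{2n}$ is strictly positive definite (any $\lambda$ exceeding the operator norm of $A$ works); then both $A + \lambda I_{2n}$ and $\lambda I_{2n}$ are strictly positive definite of order $2n$, and $A = (A + \lambda I_{2n}) - \lambda I_{2n}$.

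Next I would promote this difference to a convex combination so that Lemma~\ref{lem1.6} applies. Note that $-\lambda I_{2n}$ itself is strictly negative definite, hence admits a symplectic dilation of order $4n$: indeed $-\lambda I_2$ has the form of expression (iv) in Lemma~\ref{lem1.2} with $\alpha$ chosen so that $-\alpha = -\lambda$ on the relevant diagonal slots, or more simply, if $L$ is any symplectic dilation of $\lambda I_{2n}$ then conjugating by the symplectic matrix $J_{2n} \oplus J_{2n}$ (which negates appropriate entries) produces a dilation of $-\lambda I_{2n}$; alternatively one checks directly via Lemma~\ref{lem1.2}(iv) that each $-\lambda I_2$ dilates to order $4$, and then invokes Lemma~\ref{lem1.5}. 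So both $A + \lambda I_{2n}$ (by Lemma~\ref{lem1.7}) and $-\lambda I_{2n}$ (as just noted) admit symplectic dilations of order $4n$. Writing $A = \tfrac12\bigl(2(A+\lambda I_{2n})\bigr) + \tfrac12\bigl(-2\lambda I_{2n}\bigr)$ expresses $A$ as the convex combination with weights $p_1 = p_2 = \tfrac12$ of the two matrices $2(A+\lambda I_{2n})$ and $-2\lambda I_{2n}$, each of which is (plus or minus) a strictly positive definite matrix of order $2n$ admitting a dilation of order $4n$. Lemma~\ref{lem1.6} with $k=2$, $n$ replaced by $n$, $m = n$ then yields a symplectic dilation of $A$ of order $2 \cdot 2 \cdot (n + n) = 8n$.

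The main obstacle I anticipate is the bookkeeping around the matrix $-\lambda I_{2n}$: one must be careful that a strictly negative definite multiple of the identity genuinely admits a dilation of order exactly $4n$, since Lemma~\ref{lem1.7} is stated only for strictly positive definite matrices. This is where Lemma~\ref{lem1.2}(iv) is essential — it is precisely the entry designed to handle blocks with a sign flip — and one should verify that with an appropriate choice of $x,y$ satisfying $xy = 1 + \alpha^2$ the $4\times 4$ symplectic matrix there contains $-\lambda$ in the correct positions, or else argue the conjugation trick carefully so that symplecticity is preserved. Once that point is settled, everything else is a direct application of the convexity lemma, and the order count $8n$ falls out immediately.
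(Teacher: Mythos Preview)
Your overall strategy is exactly the paper's: write $A$ as an average of two matrices each admitting a symplectic dilation of order $4n$, then apply Lemma~\ref{lem1.6} with $k=2$ to get order $8n$. The paper uses the slightly cleaner split $A=\tfrac12\bigl((A+\lambda I)+(A-\lambda I)\bigr)$ with $\lambda$ large enough that both $\lambda I\pm A$ are strictly positive definite, and handles the sign on $A-\lambda I$ by the one-line observation that if $\widetilde L\in Sp(2(n+m))$ then $-\widetilde L\in Sp(2(n+m))$ as well; so a dilation of $\lambda I-A$ immediately yields one for $A-\lambda I$.

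Where your argument goes wrong is precisely at the point you flag as the obstacle. Neither of your two proposed fixes for the block $-\lambda I_2$ works as stated. Lemma~\ref{lem1.2}(iv) has top-left block $\operatorname{diag}(\alpha,-\alpha)$, not $-\lambda I_2$, so it does not dilate $-\lambda I_2$. And conjugation by $J_{2n}\oplus J_{2n}$ (which, with the paper's convention, equals $J_{4n}$) sends the top-left block $\lambda I_{2n}$ to $J_{2n}(\lambda I_{2n})J_{2n}^{-1}=\lambda I_{2n}$, so nothing gets negated. The repair is immediate once you notice either of the following: (a) Lemma~\ref{lem1.2}(iii) carries no sign restriction on $\alpha$, so taking $\alpha=-\lambda$ there already gives a $4\times 4$ symplectic dilation of $-\lambda I_2$; or (b) use the paper's observation that $-\widetilde L$ is symplectic whenever $\widetilde L$ is, which turns any order-$4n$ dilation of $\lambda I_{2n}$ (or of $2\lambda I_{2n}$) into one for its negative. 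With either correction your proof goes through and matches the paper's.
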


\begin{proof}
Choose and fix a constant $\lambda > 0$ so that $\lambda I + A$ and $\lambda I 
- A$ are both strictly positive definite. By Lemma \ref{lem1.7} both 
$\lambda I+A$ and $\lambda I-A$  admit symplectic dilations of order $4n.$ 
When $L$ is symplectic so is $-L$ and therefore $A - \lambda I$ has a 
symplectic dilation of order $4n.$ Now, by Lemma \ref{lem1.6}, $A = \frac{1}{2} 
(A+\lambda I + A - \lambda I)$ admits a symplectic dilation of order $8n.$ 
\hfill{$\qed$}
\end{proof}

\begin{lemma}\label{lem1.9}
Let $D,E,F,G$ be $2 \times 2$ matrices such that $\left [\begin{array}{cc} D & 
E \\ F & G \end{array} \right ]$ is symplectic of order $4.$ Then the $8 
\times 8$ matrix
$$\left [ \begin{array}{cccc} 0 & D & E & 0 \\ -D^T & 0 & 0 & -F^T \\ 0 & F & G 
& 0 \\ -E^T & 0 & 0 & -G^T    \end{array} \right ] $$
is symplectic. In particular, any $4 \times 4$ skew symetric matrix of the form 
$\left [\begin{array}{cc} 0 & D \\ -D^T & 0 \end{array}  \right ]$ admits a 
symplectic dilation of order $8.$ 
\end{lemma}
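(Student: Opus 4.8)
The plan is to prove the first, structural assertion by a direct block computation, and then to read off the ``in particular'' statement as a one-line corollary, once we know that every real $2\times 2$ matrix is the upper-left block of some symplectic matrix of order $4$.

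For the structural part, put $K=\left[\begin{smallmatrix}D & E\\ F & G\end{smallmatrix}\right]\in Sp(4)$. Since the transpose of a symplectic matrix is again symplectic, $K^{T}\in Sp(4)$ too, so both $K^{T}J_{4}K=J_{4}$ and $KJ_{4}K^{T}=J_{4}$ hold, with $J_{4}=J_{2}\oplus J_{2}$. Writing these out in $2\times 2$ blocks gives the relations
\begin{align*}
D^{T}J_{2}D+F^{T}J_{2}F &= J_{2},\qquad E^{T}J_{2}E+G^{T}J_{2}G = J_{2},\qquad D^{T}J_{2}E+F^{T}J_{2}G = 0,\\
DJ_{2}D^{T}+EJ_{2}E^{T} &= J_{2},\qquad FJ_{2}F^{T}+GJ_{2}G^{T} = J_{2},\qquad DJ_{2}F^{T}+EJ_{2}G^{T} = 0,
\end{align*}
together with the transposes of the two ``$=0$'' identities. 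I would then compute the sixteen $2\times 2$ blocks of $M^{T}J_{8}M$, where $M$ is the displayed $8\times 8$ matrix and $J_{8}=J_{2}\oplus J_{2}\oplus J_{2}\oplus J_{2}$. Each diagonal block collapses to one of the four ``$=J_{2}$'' relations, and each off-diagonal block either vanishes term-by-term or reduces to one of the ``$=0$'' relations (or its transpose); hence $M^{T}J_{8}M=J_{8}$ and $M\in Sp(8)$. This step is routine bookkeeping, the only care needed being with the signs from $J_{2}^{T}=-J_{2}$ and $J_{2}^{2}=-I_{2}$.

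For the ``in particular'' statement, observe that the top-left $4\times 4$ corner of $M$ is precisely $\left[\begin{smallmatrix}0 & D\\ -D^{T} & 0\end{smallmatrix}\right]$; so it is enough to complete an arbitrary real $2\times 2$ matrix $D$ to a symplectic $K=\left[\begin{smallmatrix}D & E\\ F & G\end{smallmatrix}\right]\in Sp(4)$, and then feed this $K$ into the first part. For the completion I would use the elementary identity $X^{T}J_{2}X=(\det X)\,J_{2}$, valid for every $2\times 2$ matrix $X$. Applied to $K^{T}J_{4}K=J_{4}$, it turns the two ``$=J_{2}$'' blocks into the scalar conditions $\det D+\det F=1$ and $\det E+\det G=1$, leaving only the matrix equation $D^{T}J_{2}E+F^{T}J_{2}G=0$. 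If $\det D\neq 1$, choose any $F$ with $\det F=1-\det D$ (for instance $\mathrm{diag}(1,\,1-\det D)$), which is then nonsingular, and set $E=F$ and $G=J_{2}(F^{T})^{-1}D^{T}J_{2}F$; using $F^{T}J_{2}\cdot J_{2}(F^{T})^{-1}=-I_{2}$, one checks that the matrix equation holds identically, while $\det G=\det D$ gives $\det E+\det G=1$, so $K\in Sp(4)$. If $\det D=1$, then $D\in SL(2,\mathbb{R})=Sp(2)$ and one simply takes $K=D\oplus(D^{T})^{-1}$.

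The only genuine subtlety is this case split: the uniform formula for $G$ requires $F$ to be invertible, which fails exactly when $\det D=1$, so that value has to be patched by hand. Apart from that, both the $8\times 8$ verification and the $4\times 4$ completion are purely mechanical manipulations of the defining relation $L^{T}J_{2n}L=J_{2n}$, so I do not expect any real difficulty.
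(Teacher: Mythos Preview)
Your proposal is correct and, for the main $8\times 8$ assertion, is exactly what the paper intends: its proof reads in full ``Straightforward algebra,'' i.e.\ the block-by-block verification of $M^{T}J_{8}M=J_{8}$ using both $K^{T}J_{4}K=J_{4}$ and $KJ_{4}K^{T}=J_{4}$, precisely as you outline.

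For the ``in particular'' clause there is a minor difference worth noting. You give a self-contained construction of a symplectic completion $\left[\begin{smallmatrix}D & E\\ F & G\end{smallmatrix}\right]\in Sp(4)$ for an arbitrary $2\times 2$ matrix $D$, via the determinant identity $X^{T}J_{2}X=(\det X)J_{2}$ and a case split on $\det D$. In the paper's logical flow this step is already available as Lemma~1.3 (every $2\times 2$ matrix admits a symplectic dilation of order $4$), proved there by reducing $D$ via $SL(2,\mathbb{R})$-equivalence to one of four diagonal forms and invoking the explicit $4\times 4$ symplectic matrices of Lemma~1.2. So the paper would simply cite Lemma~1.3 here, whereas your argument re-derives that fact by a different, more direct route. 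Both are fine; yours has the virtue of producing an explicit formula for the completion, while the paper's reuses work already done.
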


\begin{proof}
 Straightforward algebra. \hfill{$\qed$}
\end{proof}

\vskip0.1in
\noindent{\bf Proof of Theorem \ref{thm1.1}.} By Lemma \ref{lem1.3} it is 
enough to consider the case $n > 1.$   Express the $2n \times 2n$ 
matrix $A$ as an $n \times n$ block matrix
$$A = \left [ A_{ij} \right ], \quad i, j \in \left \{ 1,2,\ldots, n\right \}$$
where each $A_{ij}$ is of order $2 \times 2.$ Then $A$ can be written as
$$A = A_1 + A_2 + A_3$$
where $A_1 = \frac{1}{2} (A+A^T)$ is symmetric, $A_2$ is the block diagonal 
matrix 
$$\frac{1}{2} \left \{(A_{11} - A_{11}^T) \oplus (A_{22} - A_{22}^T) 
\oplus \cdots \oplus (A_{nn} - A_{nn}^T) \right \}$$ 
and 
$$A_3 = \underset{1 \leq i < j \leq n}{\sum} \,\,B_{ij}$$
with $B_{ij}$ being a block matrix with $ij$th block $\frac{1}{2} \left ( 
A_{ij} - A_{ji}^T \right )$ and $ji$th  block $\frac{1}{2} \left ( 
A_{ji} - A_{ij}^T \right )$ and all other blocks equal to $0.$  Write $A_1 = 
A_1^{+} - A_1^{-}$  where $A_1^{\pm},$ are positive and negative parts of 
$A_1.$ 
Put $B_1 = A_1^{+} + \varepsilon I,$  $B_2 = A_1^{-} + \varepsilon I$ where 
$\varepsilon > 0,$ so that $B_1$ and $B_2$ are strictly positive definite. Then 
$A_1 = B_1 - B_2$ where, by Lemma \ref{lem1.7}, $B_1$ and $B_2$ admit symplectic 
dilations of order $4n.$   To deal with the term $A_3$ we follow a method of 
colouring complete graphs \cite{8} as suggested by Ajit Iqbal Singh. First 
consider the case when $n$ is even. View the suffix pairs $ij$ of the matrices 
$B_{ij}$ as the edges of a complete graph $G=(V,E)$ where the vertex set $V$ is 
$\{1,2,\ldots, n\}$ and  the edge set $E$ is $\{ij, 1 \leq i < j \leq n \}.$ 
Suppose the vertices $2,3,\ldots, n$ are the (geometrical) vertices of a 
regular polygon of $n-1$ sides and the vertex $1$ is at the centre of the 
polygon. Denote by $E_r$ the set consisting of the edge $1 \,\,r+1$ and the 
$\frac{1}{2}n-1$ edges perpendicular to the edge  $1 \,\,r+1.$ Then no two 
edges in $E_r$ have a common vertex and no two $E_r$'s have common edge. Thus 
the set $E$ can be expressed as a disjoint union:
$$E = \underset{r=1}{\bigcup^{n-1}} E_r, \,\,E_r \cap E_s = \emptyset 
\,\,\mbox{if}\,\, r \neq s.$$
If 
$$C_r = \sum\limits_{ij \in E_{r}} \,B_{ij}$$
then
$$A_3 = \sum\limits_{r=1}^{n-1} \,C_r.$$
It is clear that $C_r$ is  a direct sum of $n/2$ matrices of order $4 \times 4:$
$$\left [ \begin{array}{cc} 0 & \frac{1}{2} \left ( A_{ij} - A_{ji}^T \right ) 
\\ \frac{1}{2} \left ( A_{ji} - A_{ij}^T \right ) & 0   \end{array} \right ], 
\,\, ij \in E_r.$$
By Lemma \ref{lem1.5} and  Lemma \ref{lem1.9}, each $C_r$ admits a symplectic 
dilation of order $8 \times n/2 = 4n.$ By the same arguments each of the $n+2$ 
matrices $(n+2)B_1,$ $-(n+2)B_2,$ $(n+2)A_2,$ $(n+2)C_1, \ldots, (n+2) C_{n-1}$ 
admits a symplectic dilation of order $4n.$ By Lemma \ref{lem1.6} the matrix
$$A = \frac{1}{n+2} \left \{ (n+2)B_1-(n+2)B_2+(n+2)A_2+(n+2)C_1 + \cdots 
+ (n+2) C_{n-1} \right \} $$
admits a symplectic dilation of order $4n (n+2).$ This completes the proof when 
$n$ is even.

Let now $n$ be odd. Then add one more vertex $n+1$ and form the edge sets $E_1, 
E_2, \ldots, E_n$ as before. Then each $E_r$  has $\frac{n+1}{2}$ edges and 
exactly one of them meets the vertex $n+1.$ Delete it and call the remaining 
subset $E_r^{\prime}$ which has $\frac{n-1}{2}$ edges. Now define
$$C_r = \sum_{ij \in E_{r}^{\prime}} \, B_{ij}, \quad 1 \leq r \leq n.$$
Now repeating the same arguments we get a symplectic dilation of $A$ of order 
$4n(n+3).$ \hfill{$\square$}

\begin{remark}
Theorem \ref{thm1.1} and its proof show that every real matrix of order $2n 
\times 2n$ admits a symplectic dilation of order $4n(n+2)$ if $n$ is even and 
$4n(n+3)$ if $n$ is odd. The problem of finding the size of the minimal 
symplectic dilation of a matrix of order $2n \times 2n$ remains open. It is to 
be noted that symplecticity of order $2k$ is with respect to the matrix 
$J_{2k}$ as defined in \eqref{eq1.1}.
\end{remark}

\section{The Symplectic Group, Weyl operators and Gaussian States}
\label{sec:2}

We shall present in this section a brief account of the role of the group $Sp 
(2n)$ in describing Gaussian states in $L^2 (\mathbb{R})$ and some of their 
properties which are important in the formulation of the notion of Gaussian 
channels in quantum information theory. For proofs of results stated here we 
refer to Holevo \cite{4} and Parthasarathy \cite{6}, \cite{7}.

Consider  the complex Hilbert space $\mathbb{C}^n$ and $L^2 (\mathbb{R}^n),$ 
the space of complex-valued, absolutely square integrable functions with 
respect to the $n$-dimensional Lebesgue measure. We shall write all scalar 
products in the Dirac notation. For any $ \mathbf{u} = (u_1, u_2, \ldots, 
u_n)^T$ in $\mathbb{C}^n,$ associate the {\it exponential vector} $e 
(\mathbf{u})$ in $L^2 (\mathbb{R}^n)$ by
\begin{equation}
 e (\mathbf{u})(\mathbf{x}) = (2 \pi)^{-n/4} \,\, \exp \,\sum_{j=1}^n \left ( 
u_j x_j - \frac{1}{2} u_j^2 - \frac{1}{4} x_j^2  \right )  \label{eq2.1}
\end{equation}
for $\mathbf{x} \in \mathbb{R}^n.$ Exponential vectors span a dense linear 
manifold $\mathcal{E} \subset L^2 (\mathbb{R}^n)$ and any finite number of 
exponential vectors are linearly independent. Furthermore,
\begin{eqnarray*}
 \langle e (\mathbf{u}) \big | e (\mathbf{v}) \rangle &=& \exp \langle 
\mathbf{u} | \mathbf{v} \rangle \\
&=& \exp \sum_{j=1}^{n} \overline{u}_j v_j.
\end{eqnarray*}
In particular,
$$| \psi (\mathbf{u}) \rangle = e^{-\frac{1}{2} \|\mathbf{u}\|^{2}} | e 
(\mathbf{u}) \rangle$$
is a unit vector and the pure state determined by this vector is called the 
{\it coherent state} associated with $\mathbf{u}.$

For any $\mathbf{u} \in \mathbb{C}^n,$ there exists (an exponential like) 
unitary operator $W(\mathbf{u})$ in $L^2 (\mathbb{R}^n),$ called {\it  Weyl operator} 
associated with $\mathbf{u},$ satisfying the relations
\begin{equation}
W(\mathbf{u}) | e (\mathbf{v}) \rangle = e^{- \frac{1}{2} \|\mathbf{u}\|^{2} - 
\langle\mathbf{u}|\mathbf{v} \rangle  }   \,\,\, | e (\mathbf{u}+\mathbf{v}) 
\rangle
\label{eq2.2}
\end{equation}
for all $\mathbf{v} \in \mathbb{C}^n.$ For a given  $\mathbf{u}$ such a 
$W(\mathbf{u})$ is uniquely defined. They obey the Weyl commutation relations
\begin{equation}
W(\mathbf{u}) W(\mathbf{v}) = e^{-i \mathcal{I}m \langle \mathbf{u}| \mathbf{v} 
\rangle} W (\mathbf{u}+\mathbf{v}).   \label{eq2.3}
\end{equation}
It is a multiplicative family of operators modulo a scalar multiplier. From 
(\ref{eq2.3}) one gets
\begin{equation}
W (\mathbf{u}) W(\mathbf{v}) W(\mathbf{u})^{-1} = e^{-2i \mathcal{I}m \langle 
\mathbf{u} | \mathbf{v} \rangle} W (\mathbf{v}).   \label{eq2.4}
\end{equation}
The correspondence $\mathbf{u} \rightarrow W(\mathbf{u})$ is a strongly 
continuous, projective, unitary and irreducible representation of the additive 
group $\mathbb{C}^n.$ If we identify $L^2 (\mathbb{R}^{n+m})$ with $L^2 
(\mathbb{R}^n) \otimes L^2 (\mathbb{R}^m)$ and $\mathbb{C}^{n+m}$ with 
$\mathbb{C}^n \oplus \mathbb{C}^m,$ then $W (\mathbf{u} \oplus \mathbf{v})$ 
gets identified with $W(\mathbf{u}) \otimes W(\mathbf{v})$ and we simply write 
$W (\mathbf{u} \oplus \mathbf{v}) = W (\mathbf{u}) \otimes W(\mathbf{v}).$ This 
is called the {\it factorizability} property of the Weyl representation $ 
\mathbf{u} \rightarrow W(\mathbf{u}).$

Suppose $\mathbf{u} \rightarrow W^{\prime} (\mathbf{u})$ is a strongly 
continuous map from $\mathbb{C}^n$ into the unitary group of a complex 
separable Hilbert space $\mathcal{H}$ satisfying equation (\ref{eq2.3}) with 
$W$ replaced by $W^{\prime}.$ Then, according to Stone-von Neumann theorem 
there is a unitary isomorphism $\Gamma$ from $\mathcal{H}$ onto $L^2 
(\mathbb{R}^n) \otimes k$ for some Hilbert space $k$ such that
$$\Gamma W^{\prime} (\mathbf{u}) \Gamma^{-1} = W (\mathbf{u}) \otimes I_k  
\quad \forall \quad \mathbf{u} \in \mathbb{C}^n,$$
$I_k$ being the identity opertor in $k.$ If $W^{\prime}$ is also  irreducible 
then $W$ and $W^{\prime}$ are unitarily equivalent.

For any $L \in Sp(2n)$ define the action of $L$ on $\mathbf{u}$ by $L \cdot 
\mathbf{u} = \mathbf{u}^{\prime}$ where
$$L \left [ \begin{array}{c} Re \, u_1 \\ \mathcal{I}m \, u_1 \\ \vdots \\ Re 
\,u_n \\ \mathcal{I}m \,u_n  \end{array} \right ] = \left [ \begin{array}{c}  Re 
\, u_1^{\prime}\\ \mathcal{I}m \, u_1^{\prime} \\ \vdots \\ Re 
\,u_n ^{\prime}\\ \mathcal{I}m \,u_n^{\prime} 
\end{array} \right ] \quad \forall \quad \mathbf{u} \in \mathbb{C}^n.$$
Then such an action preserves the real bilinear form $\mathcal{I}m \langle 
\mathbf{u}| \mathbf{v}\rangle$ and therefore
$$W (L \cdot \mathbf{u}) \,W(L \cdot \mathbf{v}) = e^{-i \,\mathcal{I}m \, 
\langle \mathbf{u}| \mathbf{v}\rangle} W (L \cdot (\mathbf{u}+\mathbf{v})). $$
Hence by the discussion above there exists  a unitary operator, say $\Gamma (L),$ 
such that
\begin{equation}
\Gamma(L) \, W(\mathbf{u}) \Gamma (L)^{-1} = W(L \cdot \mathbf{u}) \quad 
\forall \,\,\mathbf{u} \in \mathbb{C}^n.      \label{eq2.5}
\end{equation}
Since the projective representation $\mathbf{u} \rightarrow W (L \cdot 
\mathbf{u})$ is also irreducible it follows that the choice of the unitary 
operator $\Gamma(L)$ is unique upto a scalar multiple of modulus unity. Thus 
the map $L \rightarrow \Gamma(L)$ is a projective unitary representation of 
$Sp(2n).$ If $U$ is a unitary matrix of order $n \times n,$ $U$ preserves the 
scalar product in $\mathbb{C}^n$ and hence preserves $\mathcal{I}m \langle 
\mathbf{u}|\mathbf{v}\rangle.$ In other words there exists a matrix $L^U \in 
Sp(2n) \cap SO(2n)$ such that $L^U \cdot \mathbf{u} = U \mathbf{u}.$ There 
exists a unique unitary operator $\Gamma_0 (U)$ in $L^2 (\mathbb{R}^n)$ such 
that
$$ \Gamma_0 (U) e(\mathbf{u}) = e (U \mathbf{u}) \quad \forall \quad 
\mathbf{u} \in \mathbb{C}^n.$$
Then $\Gamma_0 (U) \Gamma_0(U^{\prime})=\Gamma_0 (UU^{\prime})$ for any 
$U,U^{\prime} \in U (n)$ and $\Gamma (L^U)$ can be chosen to be $\Gamma_0 (U).$ 
The representation $U \rightarrow \Gamma_0(U)$ of $U(n)$ is called the {\it 
second quantization} map. Thus we get a projective unitary representation $L 
\rightarrow \Gamma (L)$ of $Sp(2n)$ such that (\ref{eq2.5}) holds, $\Gamma 
(L_1) \Gamma(L_2) = \Gamma (L_1 L_2)$ whenever $L_1$ and $L_2$ are symplectic 
and orthogonal but, in general,
$$\Gamma (L_1) \Gamma(L_2) = \sigma (L_1, L_2) \Gamma (L_1 L_2)$$
where $\sigma(L_1, L_2)$ is a scalar of modulus unity.

\begin{definition}\label{def2.1}
A {\it state} $\rho$ in $L^2(\mathbb{R}^n)$ is a positive operator of unit 
trace and its {\it quantum Fourier transform} $\widehat{\rho}(\mathbf{u})$ is 
the complex-valued function on $\mathbb{C}^n$ defined by
$$\widehat{\rho} (\mathbf{u}) = \mbox{Tr}  \,\,\, \rho \,W(\mathbf{u}) \quad 
\forall \quad \mathbf{u} \in \mathbb{C}^n.$$ 
The quantum Fourier transform of a state satisfies the following properties:
\begin{itemize}
 \item[(i)] \quad $\widehat{\rho} (\mathbf{0}) = 1$ and the map $\mathbf{u} 
\rightarrow \widehat{\rho} (\mathbf{u})$ is continuous on $\mathbb{C}^n.$
 \item[(ii)]\quad The kernel
$$k_{\rho} (\mathbf{u}, \mathbf{v}) = e^{i \,\mathcal{I}m \langle 
\mathbf{u}|\mathbf{v}\rangle} \widehat{\rho} (\mathbf{v}-\mathbf{u}), 
\mathbf{u}, \mathbf{v} \in \mathbb{C}^n$$
is positive definite, i.e., for any finite set $\{\mathbf{u}_r, 1 \leq r \leq 
m\} \subset \mathbb{C}^n$ and scalars $c_r, 1 \leq r \leq m$
$$\Sigma \overline{c}_r c_s \,\,k_{\rho} (\mathbf{u}_r, \mathbf{u}_s) \ge 0.$$
 \item[(iii)]\quad (Inversion formula) For any state $\rho$ in $L^2 
(\mathbb{R}^n)$
\begin{eqnarray*}
 \rho &=& \frac{1}{\pi^{n}} \,\,\int\,\,\widehat{\rho}(\mathbf{u})
\,\,W(-\mathbf{u}) d^{2n} \,\mathbf{u}\\
&=& \frac{1}{\pi^{n}} \,\,\int\,\,\overline{\widehat{\rho}(\mathbf{u})} 
\,\,W(\mathbf{u}) d^{2n} \,\mathbf{u},
\end{eqnarray*}
where $d^{2n} \mathbf{u}$ is the $2n$-dimensional Lebesgue measure when 
$\mathbb{C}^n$ is considered as the real linear space $\mathbb{R}^{2n}.$
 \item[(iv)] \quad (Quantum Bochner's Theorem) let $\varphi$ be any 
complex-valued continuous function on $\mathbb{C}^n$ such that $\varphi 
(\mathbf{0})=1$ and the kernel $k(\mathbf{u}, \mathbf{v}) = \varphi 
(\mathbf{v}-\mathbf{u}) \exp \,\,i \langle \mathbf{u}| \mathbf{v}\rangle$ is 
positive definite. Then there exists a unique state $\rho$ in $L^2 
(\mathbb{R}^n)$ such that $\widehat{\rho}=\varphi$
 \item[(v)]\quad For any state $\rho$ in $L^2 (\mathbb{R}^n)$ and any $L \in Sp(2n)$
$$\left [ \Gamma(L) \,\,\rho\,\,\Gamma(L)^{-1} \right ]^{\wedge} (\mathbf{u}) 
= \widehat{\rho} (L^{-1} \mathbf{u}) \quad \forall \quad \mathbf{u} \in 
\mathbb{C}^n.$$
 \item[(vi)] \quad For any state $\rho$ in $L^2 (\mathbb{R}^{n+m}) = L^2 
(\mathbb{R}^n) \otimes L^2 (\mathbb{R}^m)$ define the {\it marginal} states 
$\rho_1$ and $\rho_2$ in $L^2 (\mathbb{R}^n)$ and $L^2 (\mathbb{R}^m)$ 
respectively by
$$\rho_1 = \mbox{Tr}_2\,\, \rho, \quad \rho_2 = \mbox{Tr}_1\,\,\rho  $$
where $\mbox{Tr}_1$ and $\mbox{Tr}_2$ are the relative traces of $\rho$ over 
the factors $L^2 (\mathbb{R}^n)$ and $L^2 (\mathbb{R}^m)$ respectively. Then
\begin{eqnarray*}
 \widehat{\rho}_1 (\mathbf{u}) &=&  \widehat{\rho} (\mathbf{u} \oplus 
\mathbf{0}),\\
\widehat{\rho}_2 (\mathbf{v}) &=&  \widehat{\rho} (\mathbf{0} \oplus 
\mathbf{v})
\end{eqnarray*}
where $\mathbf{u} \in \mathbb{C}^n,$ $\mathbf{v} \in \mathbb{C}^m.$
\end{itemize}
\end{definition}

\begin{definition}\label{def2.2}
 A state $\rho$ in $L^2 (\mathbb{R}^n)$ is said to be {\it Gaussian} if its  
quantum Fourier transform has the form
$$\widehat{\rho} (\mathbf{u}) = \exp \,\,P(x_1, y_1, \ldots, x_n, y_n) \quad 
\forall \quad \mathbf{u} \in \mathbb{C}^n$$
where $x_j = \mbox{Re}\, u_j,$ $y_j = Im \,\,u_j$ and $P$ is a 
polynomial in $2n$ variables of degree at most 2.
\end{definition}

\begin{theorem}\label{thm2.3}
A state $\rho$ in $L^2 (\mathbb{R}^n)$ is Gaussian if and only there exist 
vectors ${\boldsymbol \ell},$ $\mathbf{m}$ in $\mathbb{R}^n$ and 
a 
real positive 
definite matrix $S$ of order $2n$ satisfying the inequality
\begin{equation}
2 S - i \,\,J_{2n} \ge 0,    \label{eq2.6}
\end{equation}
such that the quantum Fourier transform $\widehat{\rho} (\mathbf{u})$ is given 
by
\begin{equation}
\widehat{\rho} (\mathbf{u}) = \exp - i \sqrt{2}  \left ( {\boldsymbol 
\ell}^T \,\,\mathbf{x} - 
\mathbf{m}^T \mathbf{y} \right ) - \left (\mathbf{x}^T, \mathbf{y}^T \right ) S 
\binom{\mathbf{x}}{\mathbf{y}}  \quad \forall \,\,\mathbf{u} \in 
\mathbb{C}^{2n} \label{eq2.7}
\end{equation}
where $x_j = \mbox{Re}\, u_j,$ $y_j = Im \,u_j,$ $1 \leq j \leq n$ 
and $J_{2n}$ is given by (\ref{eq1.1}). 
\end{theorem}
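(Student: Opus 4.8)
The plan is to read off each side of the equivalence as a property of the function $\widehat{\rho}$ on $\mathbb{R}^{2n}$, using the general properties (i)--(vi) of Definition~\ref{def2.1} together with the representation theory of Section~\ref{sec:2}. \emph{Necessity.} Suppose $\rho$ is Gaussian, so $\widehat{\rho}(\mathbf{u})=e^{P(\eta)}$ with $\eta=(x_1,y_1,\dots,x_n,y_n)^{T}$, $x_j=\mathrm{Re}\,u_j$, $y_j=\mathcal{I}m\,u_j$, $\deg P\le2$; since $\widehat{\rho}(\mathbf{0})=1$ we may take $P(\mathbf{0})=0$, so $P=P_1+P_2$ with $P_1$ linear and $P_2$ quadratic. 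From $\rho=\rho^{*}$ and $W(\mathbf{u})^{*}=W(-\mathbf{u})$ (a consequence of (\ref{eq2.3})) one gets $\overline{\widehat{\rho}(\mathbf{u})}=\widehat{\rho}(-\mathbf{u})$, hence $\overline{P_2}=P_2$ and $\overline{P_1}=-P_1$: so $P_2(\eta)=-\eta^{T}S\eta$ with $S$ real symmetric (after symmetrizing), while $P_1$ is $i$ times a real linear form, which we write as $-i\sqrt2\,({\boldsymbol \ell}^{T}\mathbf{x}-\mathbf{m}^{T}\mathbf{y})$ for unique ${\boldsymbol \ell},\mathbf{m}\in\mathbb{R}^{n}$. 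This is the form (\ref{eq2.7}), and it remains to prove $2S-iJ_{2n}\ge0$; granting that, $S\ge0$ is immediate (take $z$ real), and $S$ is nonsingular, for a null vector $v$ of $S$ would force the function $b\mapsto 2b^{T}Sb+2v^{T}J_{2n}b$ to be $\ge0$ on $\mathbb{R}^{2n}$, which is impossible because its gradient at $b=0$ is $-2J_{2n}v\ne0$; hence $S$ is strictly positive definite.

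\emph{The inequality $2S-iJ_{2n}\ge0$.} Here I would use the positive-definiteness of the kernel $k_{\rho}(\mathbf{u},\mathbf{v})=e^{\,i\,\mathcal{I}m\langle\mathbf{u}|\mathbf{v}\rangle}\widehat{\rho}(\mathbf{v}-\mathbf{u})$ (property (ii)). Writing $P_1(\mathbf{w})=iL(\mathbf{w})$ with $L$ real-linear, one has $k_{\rho}(\mathbf{u},\mathbf{v})=\overline{e^{\,iL(\mathbf{u})}}\,e^{\,iL(\mathbf{v})}\,k_0(\mathbf{u},\mathbf{v})$, where $k_0$ is the kernel attached to the centred Gaussian $e^{-\eta^{T}S\eta}$; since $e^{iL}$ is nowhere zero, $k_{\rho}$ and $k_0$ are positive-definite together. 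Expanding $-(\eta_{\mathbf{v}}-\eta_{\mathbf{u}})^{T}S(\eta_{\mathbf{v}}-\eta_{\mathbf{u}})$ and using $\mathcal{I}m\langle\mathbf{u}|\mathbf{v}\rangle=-\eta_{\mathbf{u}}^{T}J_{2n}\eta_{\mathbf{v}}$ in the coordinates ordered as in (\ref{eq1.1}), one finds $k_0(\mathbf{u},\mathbf{v})=\overline{f(\mathbf{u})}\,f(\mathbf{v})\,\exp\!\big(\eta_{\mathbf{u}}^{T}(2S-iJ_{2n})\eta_{\mathbf{v}}\big)$ with $f(\mathbf{u})=e^{-\eta_{\mathbf{u}}^{T}S\eta_{\mathbf{u}}}$, and $M:=2S-iJ_{2n}$ is Hermitian. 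I would then invoke the elementary fact that, for Hermitian $M$, the kernel $(\eta,\zeta)\mapsto\exp(\eta^{T}M\zeta)$ on $\mathbb{R}^{2n}\times\mathbb{R}^{2n}$ is positive-definite iff $M\ge0$: for ``if'' write $M=N^{*}N$, so $\exp(\eta^{T}M\zeta)=\exp\langle N\eta\,|\,N\zeta\rangle=\sum_{k\ge0}\frac{1}{k!}\langle (N\eta)^{\otimes k}\,|\,(N\zeta)^{\otimes k}\rangle$ is a convergent sum of Gram kernels; for ``only if'', evaluate the kernel at points $t\xi_a$, let $t\to0$, and read off that $\eta^{T}M\zeta$ is conditionally positive-definite, which (taking $v=\sum_a c_a\xi_a$) is the same as $v^{*}Mv\ge0$ for all $v\in\mathbb{C}^{2n}$. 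Hence positive-definiteness of $k_{\rho}$ forces $2S-iJ_{2n}\ge0$, completing necessity.

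\emph{Sufficiency.} Conversely, given ${\boldsymbol \ell},\mathbf{m}\in\mathbb{R}^{n}$ and a real strictly positive definite $S$ with $2S-iJ_{2n}\ge0$, I construct a state with quantum Fourier transform (\ref{eq2.7}). By Williamson's theorem (as used in Lemma~\ref{lem1.7}) choose $L\in Sp(2n)$ with $L^{T}SL=\kappa_1I_2\oplus\cdots\oplus\kappa_nI_2$; conjugating $2S-iJ_{2n}\ge0$ by $L$ and using $L^{T}J_{2n}L=J_{2n}$ turns this into $2\kappa_jI_2-iJ_2\ge0$, i.e.\ $\kappa_j\ge\tfrac12$ for every $j$. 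For a single mode and any $\kappa\ge\tfrac12$, the thermal state $\rho_{\kappa}=(1-s)\sum_{k\ge0}s^{k}|k\rangle\langle k|$ with $s=(\kappa-\tfrac12)/(\kappa+\tfrac12)$ is manifestly a state, and a direct computation of $\mathrm{Tr}\,\rho_{\kappa}W(u)$ yields $\widehat{\rho_{\kappa}}(u)=e^{-\kappa|u|^{2}}$; by factorizability of the Weyl representation, $\rho_0:=\rho_{\kappa_1}\otimes\cdots\otimes\rho_{\kappa_n}$ is a state with $\widehat{\rho_0}$ the centred Gaussian whose matrix is $L^{T}SL$. Then $\Gamma(L)\rho_0\Gamma(L)^{-1}$ is a state and, by property (v), has quantum Fourier transform $\widehat{\rho_0}(L^{-1}\mathbf{u})$, which equals the centred Gaussian with matrix $(L^{-1})^{T}(L^{T}SL)(L^{-1})=S$. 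Finally, conjugating by the Weyl operator $W(\mathbf{a})$ with $\mathbf{a}=(\mathbf{m}+i{\boldsymbol \ell})/\sqrt2$ multiplies the Fourier transform by $e^{\,2i\,\mathcal{I}m\langle\mathbf{a}|\mathbf{u}\rangle}=\exp\!\big(-i\sqrt2({\boldsymbol \ell}^{T}\mathbf{x}-\mathbf{m}^{T}\mathbf{y})\big)$ by (\ref{eq2.4}), producing a state whose quantum Fourier transform is exactly (\ref{eq2.7}). (Alternatively, sufficiency is immediate from Quantum Bochner's Theorem (iv): the computation of the previous paragraph, read in reverse, shows that the proposed $\varphi$ has a positive-definite kernel precisely because $2S-iJ_{2n}\ge0$.)

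The only steps that are not routine bookkeeping are the identity exhibiting $k_0$ as a rank-one positive kernel times $\exp(\eta^{T}(2S-iJ_{2n})\eta')$---where the factor $\sqrt2$, the signs, and the block form of $J_{2n}$ from (\ref{eq1.1}) must be matched carefully---together with the positive-definiteness lemma for $\exp(\eta^{T}M\zeta)$; and, on the other side, the recognition of the one-mode Gaussians with Williamson parameter $\kappa\ge\tfrac12$ as thermal states with $\widehat{\rho_{\kappa}}(u)=e^{-\kappa|u|^{2}}$. I expect matching these normalizations and proving the ``only if'' half of the positive-definiteness lemma to be the main obstacle.
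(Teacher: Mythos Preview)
The paper does not actually prove Theorem~\ref{thm2.3}: its entire proof is the sentence ``See \cite{6}.'' So there is no in-paper argument to compare against in detail.

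Your proposal is correct and is essentially the standard proof (and almost certainly the one in \cite{6}). The necessity argument---reading off the real/imaginary structure of $P$ from $\overline{\widehat\rho(\mathbf u)}=\widehat\rho(-\mathbf u)$, then extracting $2S-iJ_{2n}\ge0$ from positive-definiteness of the kernel $k_\rho$ via the factorisation $k_0(\mathbf u,\mathbf v)=\overline{f(\mathbf u)}\,f(\mathbf v)\exp\bigl(\eta_{\mathbf u}^{T}(2S-iJ_{2n})\eta_{\mathbf v}\bigr)$ and the Schur-type lemma on $\exp(\eta^{T}M\zeta)$---is sound; your trick of hitting conditional positive-definiteness with $\xi_1=a$, $\xi_2=b$, $\xi_3=0$ and coefficients $1,i,-1-i$ recovers $v^{*}Mv\ge0$ for arbitrary complex $v$. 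The argument that $S$ is nonsingular (taking $w=v+ib$ with $Sv=0$) is also fine. For sufficiency, your Williamson reduction to a tensor product of one-mode thermal states, followed by conjugation by $\Gamma(L)$ and a Weyl shift, is exactly the construction the paper later exploits in property~(vii) and equation~(\ref{eq2.10}); indeed your proof anticipates and independently derives those facts. The alternative one-line sufficiency via quantum Bochner (property~(iv)) is equally valid.

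One cosmetic point: in the sufficiency paragraph you compute $\widehat{\rho_\kappa}(u)=e^{-\kappa|u|^{2}}$; in the paper's ordering $\eta=(x_1,y_1,\dots)$ this is $e^{-\kappa(x^2+y^2)}=e^{-\eta^{T}(\kappa I_2)\eta}$, matching the Williamson diagonal form, so the bookkeeping you flag does go through cleanly.
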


\begin{proof}
 See \cite{6}.
\end{proof}

\begin{remark}
We shall give now an interpretation of the parameters $\bell,$ $\mathbf{m}$ and 
$S$ occuring in  (\ref{eq2.7}) and also the matrix inequality (\ref{eq2.6}) in 
the language of the momentum and position observables obeying CCR. To this end 
we first observe that (\ref{eq2.3}) implies that for any fixed $\mathbf{u},$ 
$\{W (t \mathbf{u}), t \in \mathbb{R}\}$ is a strongly continuous one parameter 
 group of unitary operators admitting a self adjoint Stone generator 
$p(\mathbf{u})$ so that
$$W (t \mathbf{u}) = e^{-it p(\mathbf{u})}, \mathbf{u} \in \mathbb{C}^n, t \in 
\mathbb{R}. $$
Writing $\mathbf{e}_j = (0, \ldots, 0, 1, 0, \ldots, 0)^T$ for $1 \leq j \leq 
n,$ where 1 occurs in the $j$th position and putting
$$p_j = 2^{-\frac{1}{2}} \,\, p(e_j), \,\, q_j = -2^{-\frac{1}{2}} p (i e_j) $$
we obtain selfadjoint operators obeying CCR on the linear manifold 
$\mathcal{E}$ generated by the exponential vectors. Then $W(\mathbf{u})$ can be 
expressed as
$$W (\mathbf{u}) = \exp - i \sqrt{2} \sum_{j=1}^n \left ( x_j p_j - y_j 
q_j\right )$$ 
with $x_j = \mbox{Re}\, u_j,$ $y_j = \mathcal{I}m \, u_j$ for every $j$ and 
$\sum_{j=1}^n \left ( x_j p_j - y_j q_j\right )$ being the selfadjoint closure 
from $\mathcal{E}.$ From the definition of quantum Fourier transform and 
(\ref{eq2.7}) it follows that each observable $p (\mathbf{u})$ has a normal 
distribution in the Gaussian state $\rho$ with characteristic function
$$\exp - i t \sqrt{2} \sum_{j=1}^n \left ({\boldsymbol \ell}_j x_j - m_j 
y_j \right ) - t^2 
\mathbf{\xi}^T S \mathbf{\xi}, \quad t \in \mathbb{R} $$
where $\mathbf{\xi}^T = (x_1, y_1, x_2, y_2, \ldots, x_n, y_n).$ Thus
$${\boldsymbol \ell}_j = \mbox{Tr}\, \rho p_j, \quad m_j = \mbox{Tr}\, \rho 
\, q_j$$
and the covariance matrix of $(p_1, -q_1, \ldots, p_n, -q_n)$ is $S.$ When 
(\ref{eq2.7}) holds we write
\begin{equation}
\rho = \rho_g ({\boldsymbol \ell}, \mathbf{m}, S)\label{eq2.8} 
\end{equation}
in order to indicate that $\rho$ is a Gaussian state with {\it mean} momentum 
and position vectors ${\boldsymbol \ell}, \mathbf{m}$ respectively 
and {\it 
covariance matrix} $S.$ If we write
$$(Z_1, Z_2, \ldots, Z_{2n}) = (p_1, -q_1, \ldots, p_n, - q_n)$$
then
$$\mbox{Cov} (Z_r, Z_s) = \mbox{Tr}\, \frac{Z_r Z_s+Z_s Z_r }{2} \rho - 
(\mbox{Tr} \,Z_r \rho)(\mbox{Tr}\, Z_s \rho) $$
is the $rs$-th entry of $S.$ With this convention the inequality (\ref{eq2.6}) 
encapsulates the uncertainty principle for all momentum - position pairs 
$(P,Q)$ where
\begin{eqnarray*}
Q &=& \sum_{r=1}^n \,\,\left (x_r p_r - y_r q_r \right ) \\
P &=& \sum_{r=1}^n \,\,\left (x^{\prime}_r p_r - y^{\prime}_r q_r \right ) 
\end{eqnarray*}
and $\sum\limits_{r=1}^n \left ( x_r y^{\prime}_r - x^{\prime}_r y_r \right ) = 
1,$ $x_r,$ $y_r,$ $x_r^{\prime}, y_r^{\prime}$ being real scalars.
\end{remark}

We now enumerate some of the basic properties of Gaussian states, parametrized 
as in (\ref{eq2.8}). They are essentially corollaries of Theorem \ref{thm2.3} 
and the basic properties of quantum Fourier transform mentioned earlier.
\begin{itemize}
 \item[(i)]\quad Tensor products and marginals of Gaussian states are Gaussian.

 \item[(ii)] \quad (Tranformation property) For any $\mathbf{u} = \mathbf{s}+i 
\mathbf{t}, \mathbf{s}, \mathbf{t} \in \mathbb{R}^n,$
$$W (\mathbf{u}) \rho_g ({\boldsymbol \ell}, \mathbf{m}, S) 
W(\mathbf{u})^{-1} = 
\rho_g ({\boldsymbol \ell} + \sqrt{2} \mathbf{t}, \mathbf{m} + 
\sqrt{2}
\mathbf{s}, S), $$
and for any $L \in Sp (2n),$
$$\Gamma (L) \rho_g ({\boldsymbol \ell}, \mathbf{m}, S) 
\Gamma(L)^{-1} = \rho_g 
({\boldsymbol \ell}^{\prime}, \mathbf{m}^{\prime}, S^{\prime})$$
where
\begin{eqnarray*}
 \left [ \begin{array}{c} \ell^{\prime}_1 \\ -m_1^{\prime} \\ \vdots \\ 
\ell_n^{\prime} \\ -m_n^{\prime}  \end{array} \right ] &=& (L^{-1})^T \left [ 
\begin{array}{c}  \ell_1 \\ -m_1\\ \vdots \\ 
\ell_n \\ -m_n  \end{array} \right ], \\
S^{\prime} &=& (L^{-1} )^T S L^{-1}
\end{eqnarray*}

\item[(iii)] \quad If $\Gamma$ is any unitary operator in $L^2 (\mathbb{R}^n)$ 
satisfying the property that $\Gamma \rho \Gamma^{-1}$ is a Gaussian state 
whenever $\rho$ is a Gaussian state in $L^2 (\mathbb{R}^n)$ then $\Gamma$ is 
given by
\begin{equation}
\Gamma = \lambda W (\mathbf{u}) \Gamma (L)    \label{eq2.9}
\end{equation}
for some complex scalar $\lambda$ of mudulus unity, $\mathbf{u} \in 
\mathbb{C}^n$ and $L \in Sp(2n).$
 
In view of properties (ii) and (iii) any unitary operator $\Gamma$ of the form 
(\ref{eq2.9}) is called a {\it Gaussian symmetry operator}. All Gaussian 
symmetry operators in $L^2 (\mathbb{R}^n)$ constitute a group $\mathcal{G}_n.$

Suppose $U$ is a unitary operator in $L^2 (\mathbb{R}^n)$ such that for every 
pure Gaussian state $| \psi \rangle$ the pure state $U \,\,| \psi \rangle$ is 
also Gaussian. Is $U$ a Gaussian symmetry operator? We do not know the answer.

\item[(iv)]\quad  The covariance matrix $S$ in $\rho_g ({\boldsymbol 
\ell}, 
\mathbf{m}, S)$ admits a  representation
$$ S = L^T (\kappa_1 I_2 \oplus \kappa_2 I_2 \oplus \cdots \oplus \kappa_n I_2) L$$
for some $L \in Sp (2n),$ $\kappa_1 \ge \kappa_2 \ge \cdots \ge \kappa_n \ge \frac{1}{2}.$ In 
such a representation $L$ need not be unique but $\kappa_1, \kappa_2, \ldots, \kappa_n$ are 
unique. The $\kappa_j$'s are the {\it Williamson parameters} of $S.$

\item[(v)]\quad The covariance matrix $S$ of a Gaussian state in $L^2 
(\mathbb{R}^n)$ admits a representation
$$S = \frac{1}{4} \left ( L^T L + M^T M \right )$$
for some $L, M$ in $Sp(2n).$ Such a representation need not be unique. In the 
convex set $\mathcal{K}_n$ of all Gaussian covariance matrices of order $2n,$ 
an element 
$S$ is an extreme point if and only if $S=\frac{1}{2} L^T L$ for some $L$ in 
$Sp(2n).$ A Gaussian state $\rho$ in $L^2 (\mathbb{R}^n)$ is pure if and only 
if its covariance matrix is of the form $\frac{1}{2} L^TL$ for some $L$ in $Sp 
(2n)$ and in such a case its wave function $|\psi\rangle$ has the form
$$| \psi \rangle = W (\mathbf{u}) \Gamma(L) \,\,|e (\mathbf{0}) \rangle. $$
Every coherent state is Gaussian with covariance matrix $\frac{1}{2} I_{2n}.$

\item[(vi)] \quad(Gaussian purification property) Let $\rho_g ({\boldsymbol 
\ell}, 
\mathbf{m}, S)$ be a Gaussian state in $L^2 (\mathbb{R}^n)$ with
$$S = \frac{1}{4} \left (L_1 ^T L_1 + L_2^T L_2  \right ) $$
as in property (v). Put
$$| \psi_j \rangle = \Gamma (L_j)^{-1} |e(\mathbf{0}) \rangle, \quad j=1,2, $$
and define the second quantization unitary operator $\Gamma_0$ in $L^2 
(\mathbb{R}^n) \otimes L^2 (\mathbb{R}^n)$ by the relations
$$\Gamma_0 e (\mathbf{v} \oplus \mathbf{v}^{\prime}) = e \left 
(\frac{\mathbf{v}+\mathbf{v}^{\prime}}{\sqrt{2}} \oplus  
\frac{\mathbf{v}-\mathbf{v}^{\prime}}{\sqrt{2}}\right ) \quad \forall \quad 
\mathbf{v}, \mathbf{v}^{\prime} \in \mathbb{C}^n.$$
Putting
$$ \Gamma = \left (W \left ( \frac{\mathbf{m} + i {\boldsymbol \ell}}{\sqrt{2}} 
\right ) \otimes I \right ) \Gamma_0$$
one gets a purification of $\rho_g ({\boldsymbol \ell}, \mathbf{m}, S)$ as
$$\rho_g ({\boldsymbol \ell}, \mathbf{m}, S) = \mbox{Tr}_2\, \Gamma \left 
(|\psi_1 
\rangle \langle \psi_1 | \otimes |\psi_2 \rangle \langle \psi_2| \right ) 
\Gamma^{-1}$$
where the pure state within $\mbox{Tr}_2$ is also Gaussian.

\item[(vii)]\quad (von Neumann entropy of $\rho_g ({\boldsymbol \ell}, 
\mathbf{m}, 
S)$)\\First write
$$\rho_g ({\boldsymbol \ell}, \mathbf{m},S) = W \left (\frac{\mathbf{m}+i 
{\boldsymbol \ell}}{\sqrt{2}} \right ) \rho_{g} (0,0,S) W \left 
(\frac{\mathbf{m}+i 
{\boldsymbol \ell}}{\sqrt{2}} \right )^{-1} $$
using transformaton property (ii). Using property (iv) and the Williamson 
parameters $\kappa_1 \ge \kappa_2 \ge \cdots \ge \kappa_n \ge \frac{1}{2}$ we express
$$\rho_g ({\boldsymbol \ell}, \mathbf{m}, S) = W \left ( \frac{\mathbf{m} + i 
{\boldsymbol \ell}}{\sqrt{2}} \right ) \Gamma (L)^{-1} 
\underset{j=1}{\bigotimes^{r}} \,\rho_g (\mathbf{0}, \mathbf{0}, \kappa_j I_2)
\bigotimes \rho_g (0, 0, \frac{1}{2} I_2)^{\otimes n-r} \,\,
\Gamma (L) W \left (\frac{\mathbf{m} + i 
{\boldsymbol \ell}}{\sqrt{2}} \right )$$
where we assume 
\begin{eqnarray*}
 \kappa_j &>& \frac{1}{2} \quad {\rm if}  \quad  1 \leq j \leq r,\\
&=& \frac{1}{2} \quad {\rm if}  \quad r + 1 \leq j \leq n.
\end{eqnarray*}
%
\end{itemize}
In $L^2 (\mathbb{R})$ denote the momentum and position operators by $p$ and $q$ 
respectively and note that
\begin{equation}
\rho_g (\mathbf{0}, \mathbf{0}, \kappa I_2) = \left \{\begin{array}{l} |e 
(\mathbf{0}) \rangle \langle e (\mathbf{0}| \quad {\rm if} \quad \kappa=\frac{1}{2}, 
\\
(1-e^{-s} e^{-\frac{1}{2}s(p^{2} + q^{2}-1)}) \quad {\rm if}  \quad \kappa > 
\frac{1}{2}  \end{array} \right .   \label{eq2.10}
\end{equation}
where $s$ is given by $\kappa=\frac{1}{2} \coth \frac{1}{2}s$ with $s>0.$ Writing
$\kappa_j = \frac{1}{2} \coth \frac{1}{2} s_j, \quad s_j > 0, \quad 1 \leq j \leq r$
we see that $\rho_g ({\boldsymbol \ell}, \mathbf{m}, S)$ is unitarily equivalent 
to 
the tensor product state
$$\underset{j=1}{\bigotimes^{r}}  \left (1-e^{-s_{j}} \right ) e^{-\frac{1}{2} 
s_{j} (p_{j}^{2} + q_{j}^{2}-1)} \otimes(|e (0)\rangle \langle e(0) |)^{\otimes 
n-r}. $$
In particular, every Gaussian state is conjugate by a Gaussian symmetry 
operator to a product of $r$ thermal states and $(n-r)$ vaccuum states in 
$L^2(\mathbb{R})$ where $0 \leq r \leq n.$

The von Neumann {\it entropy} of  a state $\rho$ denoted by $S(\rho)$ is 
defined as the quantity $-\mbox{Tr} \,\,\rho \,\,\log\,\,\rho.$ If $\rho$ has 
eigenvalues $\lambda_1, \lambda_2, \ldots$ inclusive of multiplicity then $S 
(\rho) = - \sum\limits_j \lambda_j \,\log \,\lambda_j.$ Thus $S(U \rho 
U^{-1})=S(\rho)$ for any unitary operator $U$ and $S(\rho_1 \otimes \rho_2) = S 
(\rho_1) + S(\rho_2)$ for any product state $\rho_1 \otimes \rho_2.$ Thus 
$$S (\rho_g ({\boldsymbol \ell}, \mathbf{m}, S)) = \sum_{j=1}^r \,\,\, S 
(\rho_g 
(0, 0, \kappa_j I_2)). $$
Since the number operator $\frac{1}{2} (p^2+q^2-1)$ has eigenvalues 
$0,1,2,\ldots$ with multiplicity $1$ each, it follows from (\ref{eq2.10}) that
$$S (\rho_g (0,0,\kappa I_2)) = \left \{\begin{array}{l} 0 \quad {\rm if} \quad \kappa = 
\frac{1}{2}, \\ \frac{2}{2\kappa+1} H \left (\frac{2\kappa-1}{2\kappa+1} \right ) \quad {\rm 
if} \quad   \kappa > \frac{1}{2}\end{array} \right . $$
where $H$ is the Shannon entropy function given by $H(t) = -t \,\log\,t-(1-t) 
\,\log \, (1-t), \,0 \leq t \leq 1$ with $H(0) =  H(1)=0.$ Thus
$$S (\rho_g ({\boldsymbol \ell}, \mathbf{m}, S)) = \sum_{j=1}^n  
\,\frac{2}{2\kappa_j+1} 
H \left (\frac{2\kappa_j-1}{2\kappa_j+1} \right ) $$
where $\kappa_1 \ge \kappa_2 \ge \cdots \ge \kappa_n \ge \frac{1}{2}$ are the Williamson 
parameters of $S.$

\section{Gaussian Channels}
 A {\it quantum channel} is a completely positive, trace preserving and linear 
map on the algebra $\mathcal{B}(\mathcal{H})$ of all bounded operators on a 
complex Hilbert space $\mathcal{H}.$ If $T$ is such a channel and 
$\rho_{{\rm in}} = \rho$ is an input state the channel gives an output 
state$\rho_{{\rm out}} = T(\rho).$ Such a channel is treated as a communication 
resource. If $\mathcal{H} = L^2 (\mathbb{R}^n)$ we say that a channel $T$ is 
{\it Gaussian} if, for every Gaussian state $\rho,$ the output state $T(\rho)$ 
is also Gaussian. The set of all channels in $L^2 (\mathbb{R}^n)$ is a 
semigroup under composition and also a convex set under mixture. The set of all 
Gaussian channels is a subsemigroup of the semigroup of all channels in $L^2 
(\mathbb{R}^n).$ We shall present some examples of Gaussian channels and 
conclude with some open problems.

\begin{example}[Reversible Gaussian Channels]\label{ex3.1}
 If  $U$ is a Gaussian symmetry operator as defined  at the end of property 
(iii) of Gaussian states in Section 2, then $T(\rho)=U\rho U^{-1}$ defines a 
Gaussian channel. $T^{-1} (\rho) = U^{-1} \rho U$ is the reverse of such a 
Gaussian channel with $T^{-1} \circ T = \,{\rm identity}.$ These are the only 
reversible Gaussian channels.
\end{example}

\begin{example}[Bosonic Gaussian channels \cite{2}, \cite{5}]\label{ex3.2}
 Let $\xi_j, \eta_j,$ $1 \leq j \leq n$ be real-valued mean zero random 
variables with a joint normal distribution and let $\zeta_j = \xi_j + i 
\eta_j.$ Write $\mathbf{\zeta}^T = (\zeta_1, \zeta_2, \ldots, \zeta_n)$ and put 
\begin{equation}
T(\rho) = \mathbb{E} W (\mathbf{\zeta}) \,\,\rho \,\,W(\mathbf{\zeta})^{-1} 
\label{eq3.1}
\end{equation}
where $\mathbb{E}$ denotes expectation with respect to the probability 
distribution of $\mathbf{\zeta}$ and $W(\mathbf{\zeta})$ is the Weyl operator 
at $\mathbf{\zeta}.$  If $\rho = \rho_g ({\boldsymbol \ell}, \mathbf{m}, S)$ in 
$L^2 
(\mathbb{R}^n)$ as defined in (\ref{eq2.8}), then (\ref{eq2.4}) implies that 
the quantum Fourier transform of $T(\rho)$ is given by
\begin{eqnarray*}
 T(\rho)^{\wedge} (\mathbf{u}) &=& \mbox{Tr} \,\mathbb{E} \,W (\mathbf{\zeta}) 
\rho_g ({\boldsymbol \ell}, \mathbf{m}, S) W (\mathbf{\zeta})^{-1} 
W(\mathbf{u})\\
&=& \rho_g ({\boldsymbol \ell}, \mathbf{m}, S)^{\wedge} (\mathbf{u}) 
\,\mathbb{E} 
\,e^{2 i \mathcal{I}m \langle \mathbf{\zeta}|\mathbf{u}\rangle}.
\end{eqnarray*}
If $\mathbf{u} = \mathbf{s} + i \mathbf{t}$ where $\mathbf{s}$ and $\mathbf{t}$ 
are in $\mathbb{R}^n$ and the covariance matrix of the Gaussian random vector 
$\sqrt{2} (-\mathbf{\eta}^T, \mathbf{\xi}^T)$ is $C$ then it follows that
$$T (\rho_g ({\boldsymbol \ell}, \mathbf{m}, S)) = \rho_g ({\boldsymbol \ell}, 
\mathbf{m}, S+C).$$
Thus $T$ is a Gaussian channel which changes the covariance matrix $S$ of the 
input Gaussian state to the covariance matrix $S+C$ of the output Gaussian 
state but leaves the mean momentum-position vector unchanged.

If $P$ is the probability distribution of the Gaussian random vector 
$\mathbf{\zeta}$ in (\ref{eq3.1}) and $\{H_{\mathbf{k}}, \mathbf{k} = (k_1, 
k_2, \ldots, k_{2n}) \}$ is the Hermite basis of orthonormal polynomials in 
$L^2 (P),$ define the operators
$$L_{\mathbf{k}} = \mathbb{E}\,\, H_{\mathbf{k}} (\mathbf{\xi}, 
\mathbf{\eta}) W(\mathbf{\zeta}). $$
Then the channel $T$ in (\ref{eq3.1}) can also be written as
$$T(\rho) = \sum_{\mathbf{k}} L_{\mathbf{k}} 
\,\,\rho\,\,L_{\mathbf{k}}^{\dagger}$$
in the Kraus or operator sum form.
\end{example}

\begin{example}[Symplectic Gaussian channels]\label{ex3.3}
 Choose and fix a mean $\mathbf{0}$ Gaussian state $\rho_0$ in$L^2 
(\mathbb{R}^m)$ and fix a symplectic matrix $L \in Sp(2(n+m)).$ For any state 
$\rho$ in $L^2 (\mathbb{R}^n)$ define
\begin{equation}
T(\rho) = \mbox{Tr}_2 \,\,\Gamma (L) (\rho \otimes \rho_0) \Gamma 
(L)^{\dagger}\label{eq3.2}
\end{equation}
where $\Gamma(L)$ is the Gaussian symmetry operator associated with $L$ and 
$\mbox{Tr}_2$ is the relative trace over the component $L^2 
(\mathbb{R}^m)$ in $L^2 (\mathbb{R}^n) \otimes L^2 (\mathbb{R}^m) = L^2 
(\mathbb{R}^{n+m}).$ Since conjugation by a unitary operator and relative trace 
are completely positive and trace preseving, it follows that $T$ as defined in 
(\ref{eq3.2}) is a channel. If  $\rho$ is Gaussian it follows from Example 
\ref{ex3.1} and property (i) of Gaussian states in Section 2 that $T$ is 
Gaussian. We call $T$ in (\ref{eq3.2}) a {\it symplectic Gaussian channel}.

We shall now examine how a {\it symplectic Gaussian channel} changes the mean 
and covariance parameters of a Gaussian state. By the Gaussian purificaton 
property (vi) of a Gaussian state in Section 2, $L^2 (\mathbb{R}^m)$ can be 
replaced by $L^2 (\mathbb{R}^m) \otimes L^2 (\mathbb{R}^m)$ and $\rho_0$ in 
(\ref{eq3.2}) by a pure Gaussian state which is determined by a unit vector of 
the form $\Gamma (M) \,\,|e (\mathbf{0})\rangle$ in $L^2 (\mathbb{R}^{2m})$ 
with $M \in Sp (2m).$ Put $k = 2m$ and note that $T$ in (\ref{eq3.2}) can as 
well be replaced by
\begin{equation}
 T(\rho) = \mbox{Tr}_2 \,\,\Gamma (L) \,\,\rho \otimes | e (\mathbf{0})\rangle 
\langle e (\mathbf{0})| \,\,\Gamma (L)^{\dagger} \label{eq3.3}
\end{equation}
with a different $L$ in $Sp(2(n+k)).$ Note that
\begin{eqnarray*}
| e (\mathbf{0})\rangle \langle e (\mathbf{0})| &=& \rho_g (\mathbf{0}, 
\mathbf{0}, \frac{1}{2} I_{2k}),   \\
\rho_g ({\boldsymbol \ell}, \mathbf{m}, S) \otimes \rho_g (\mathbf{0}, 
\mathbf{0}, 
\frac{1}{2} I_{2k}) &=& \rho_g ({\boldsymbol \ell} \oplus \mathbf{0}, 
\mathbf{m} 
\oplus \mathbf{0}, S \oplus \frac{1}{2} I_{2k}).
\end{eqnarray*}
By the transformation property of Gaussian states we have from (\ref{eq2.3})
$$T (\rho_g ({\boldsymbol \ell}, \mathbf{m}, S)) = \rho_g  
({\boldsymbol \ell}^{\prime} 
\oplus \mathbf{a}, \mathbf{m}^{\prime} \oplus \mathbf{b}, S^{\prime})$$
where 
$$S^{\prime} = (L^{-1})^T \,\,\left [\begin{array}{cc} S & 0 \\ 0 & \frac{1}{2} 
I_{2k}   \end{array} \right ] \,\,L^{-1} $$
and ${\boldsymbol \ell}^{\prime},$ $\mathbf{m}^{\prime},$ $\mathbf{a},$ 
$\mathbf{b}$ 
are obtained as follows. Define $\mathbf{\mu}, \mathbf{\mu}^{\prime}$ by 
$\mathbf{\mu}^T = ({\boldsymbol \ell}_1, - m_1, \ldots, \ell_n, - 
m_n, 0, 0, \ldots, 0),$ $(\mathbf{\mu}^{\prime})^T = (\ell_1^{\prime}, 
-m_1^{\prime}, \ldots, \ell_n^{\prime},  -m_n^{\prime}, a_1, -b_1, \ldots, 
a_k, -b_k).$ Then
$$\mathbf{\mu}^{\prime} = (L^{-1})^T\,\, \mathbf{\mu} \quad {\rm in} 
\quad \mathbb{R}^{2(n+k)}. $$
Writing
$$L^{-1} = M = \left[\begin{array}{cc}M_{11} & M_{12} \\ M_{21} & M_{22}  
\end{array}  \right ], \quad S^{\prime} = 
\left[\begin{array}{cc}S^{\prime}_{11} & 
S^{\prime}_{12} \\ S^{\prime}_{21} & S^{\prime}_{22}  
\end{array}  \right ] $$
in block notation where $11$ and $22$ blocks are of order $2n \times 2n$ and 
$2k \times 2k$ respectively we see that, for $T$ as in (\ref{eq3.3}),
$$ T (\rho_g ({\boldsymbol \ell}, \mathbf{m}, S)) = \rho_g 
(({\boldsymbol \ell}^{\prime}, \mathbf{m}^{\prime}, S_{11}^{\prime})$$
where
\begin{eqnarray*}
 S_{11}^{\prime} &=& M_{11}^T S M_{11} + \frac{1}{2} M_{21}^T M_{21},\\
\left [\begin{array}{c} \ell_1^{\prime} \\ -m_{1}^{\prime} \\ \vdots \\ 
\ell_n^{\prime} \\ -m_{n}^{\prime} \end{array} \right ] &=&  M_{11}^T \left 
[\begin{array}{c} \ell_1 \\ -m_{1} \\ \vdots \\ 
\ell_n \\ -m_{n} \end{array} \right ]. 
\end{eqnarray*}
We summarize our algebra in the form of a theorem.
\end{example}

\begin{theorem}\label{thm3.4}
The most general symplectic Gaussian channel $T$ has the property that for all 
Gaussian states $\rho_g ({\boldsymbol \ell}, \mathbf{m}, S)$ in 
$L^2(\mathbb{R}^n),$
$$ T (\rho_g ({\boldsymbol \ell}, \mathbf{m}, S)) = \rho_g 
(({\boldsymbol \ell}^{\prime}, 
\mathbf{m}^{\prime}, S^{\prime})$$
where
\begin{equation}
S^{\prime} = M_{11}^T \,\, S \,\,M_{11} + \frac{1}{2} \,\,M_{21}^T \,\, M_{21} 
\label{eq3.4}
\end{equation}
for some $M \in Sp (2 (n+k))$ for some $k$ with
\begin{equation}
 M = \left [\begin{array}{cc} M_{11} & M_{12} \\ M_{21} & M_{22} \end{array} 
\right ] \label{eq3.5}
\end{equation}
where the blocks with  labels $11$ and $22$ are matrices of order $2n \times 
2n$ and $2k \times 2k$ respectively and the vectors $\mathbf{\mu}, 
\mathbf{\mu}^{\prime}$ defined by
\begin{eqnarray}
 \mathbf{\mu}^T &=& (\ell_1, -m_1, \ldots, \ell_n, -m_n), \label{eq3.6}\\
 \mathbf{\mu}^{\prime^{T}} &=& (\ell_1^{\prime}, -m_1^{\prime}, \ldots, 
\ell_n^{\prime}, -m_n^{\prime}), \label{eq3.7}
\end{eqnarray}
where
\begin{equation}
\mathbf{\mu}^{\prime} = M_{11}^T \,\,\mathbf{\mu}.   \label{eq3.8}
\end{equation}
\end{theorem}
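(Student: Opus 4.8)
The statement merely records, as a theorem, the computation already carried out in Example~\ref{ex3.3}; the plan is therefore to assemble that computation and, in addition, to account for the phrase ``most general''.

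The first task is to reduce an arbitrary symplectic Gaussian channel $T(\rho)=\mathrm{Tr}_2\,\Gamma(L)(\rho\otimes\rho_0)\Gamma(L)^{-1}$, with $\rho_0$ a mean $\mathbf{0}$ Gaussian state in $L^2(\mathbb{R}^m)$ and $L\in Sp(2(n+m))$, to the special case in which the environment state is the vacuum $|e(\mathbf{0})\rangle\langle e(\mathbf{0})|$. By the Gaussian purification property of Section~2, $\rho_0$ is the first marginal of a pure mean $\mathbf{0}$ Gaussian state on $L^2(\mathbb{R}^m)\otimes L^2(\mathbb{R}^m)$, and by property~(v) of Section~2 that pure state is a conjugate of the vacuum by a second-quantized symplectic operator $\Gamma(M_0)$ on the doubled environment. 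Enlarging the environment to $L^2(\mathbb{R}^k)$ with $k=2m$ and using factorizability $\Gamma(L_1\oplus L_2)=\Gamma(L_1)\otimes\Gamma(L_2)$ together with $\Gamma(L_1)\Gamma(L_2)=\sigma(L_1,L_2)\Gamma(L_1L_2)$ (the unimodular scalar $\sigma$ being harmless since every $\Gamma$ occurs inside a conjugation $\Gamma\,\cdot\,\Gamma^{-1}$), the two operators $\Gamma(L)$ and $\Gamma(I_{2n}\oplus M_0)$ combine into a single $\Gamma(L')$ with $L':=L\,(I_{2n}\oplus M_0)\in Sp(2(n+k))$, which brings $T$ to the form \eqref{eq3.3}.

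Applying \eqref{eq3.3} to $\rho=\rho_g(\boldsymbol{\ell},\mathbf{m},S)$, I would then (a) use $|e(\mathbf{0})\rangle\langle e(\mathbf{0})|=\rho_g(\mathbf{0},\mathbf{0},\tfrac12 I_{2k})$ and property~(i) of Section~2 to write the argument of $\Gamma(L')$ as $\rho_g(\boldsymbol{\ell}\oplus\mathbf{0},\ \mathbf{m}\oplus\mathbf{0},\ S\oplus\tfrac12 I_{2k})$; (b) invoke the transformation property~(ii) of Section~2, which gives $\Gamma(L')\,\rho_g(\,\cdot\,)\,\Gamma(L')^{-1}$ as a Gaussian state with covariance matrix $(L'^{-1})^{T}(S\oplus\tfrac12 I_{2k})L'^{-1}$ and stacked mean vector $(L'^{-1})^{T}(\boldsymbol{\mu}\oplus\mathbf{0})$, $\boldsymbol{\mu}$ as in \eqref{eq3.6}; (c) take $\mathrm{Tr}_2$ and use the marginal formula for quantum Fourier transforms (property~(vi) of Definition~\ref{def2.1}), which amounts to retaining only the rows and columns indexed by the first $2n$ real coordinates. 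Writing $M:=L'^{-1}\in Sp(2(n+k))$ in the block form \eqref{eq3.5}, one block multiplication of $M^{T}(S\oplus\tfrac12 I_{2k})M$ yields the $11$-block $M_{11}^{T}SM_{11}+\tfrac12 M_{21}^{T}M_{21}$, and truncating $M^{T}(\boldsymbol{\mu}\oplus\mathbf{0})$ to its first $2n$ entries yields $M_{11}^{T}\boldsymbol{\mu}$, the $M_{12}$ contribution being annihilated by the zero block. These are exactly \eqref{eq3.4} and \eqref{eq3.8}.

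Finally, for the word ``most general'': the reduction above exhibits every symplectic Gaussian channel in the form \eqref{eq3.3} for some $L'\in Sp(2(n+k))$, hence acting on Gaussian states through \eqref{eq3.4}--\eqref{eq3.8}; conversely, given any $M\in Sp(2(n+k))$, setting $L'=M^{-1}$ (legitimate since $Sp(2(n+k))$ is closed under inversion) in \eqref{eq3.3} produces an honest symplectic Gaussian channel realizing those formulas. I expect the only delicate point to be the reduction step of the second paragraph --- checking that the enlargement of the environment is compatible with the relative traces and that the projective cocycle $\sigma$ really drops out; once that is granted, the remainder is the routine block algebra sketched in the third paragraph.
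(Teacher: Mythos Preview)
Your proposal is correct and follows exactly the argument the paper carries out in Example~\ref{ex3.3}, which is then ``summarized'' as Theorem~\ref{thm3.4} without a separate proof. One small slip: since the original $L$ lies in $Sp(2(n+m))$ while $I_{2n}\oplus M_0$ lies in $Sp(2(n+2m))$, the product you write for $L'$ is dimensionally ill-formed; the intended formula is $L'=(L\oplus I_{2m})(I_{2n}\oplus M_0)$, which is already implicit in your phrase ``enlarging the environment''.
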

We shall denote by $T_M$ any symplectic Gaussian channel in $L^2 
(\mathbb{R}^n)$ 
satisfying Theorem \ref{thm3.4} for some $k,$ some $M \in Sp (2 (n+k))$ and 
equations (\ref{eq3.4})-(\ref{eq3.8}) for all Gaussian states $\rho_g 
({\boldsymbol \ell}, \mathbf{m}, S).$

\begin{theorem}\label{thm3.5}
Let $T_L,$ $T_M$  be symplectic Gaussian channels in $L^2(\mathbb{R}^n)$ where 
$L \in Sp (2(n+\ell)),$ $M \in Sp (2(n+m)).$ Then there exists a symplectic 
Gaussian channel $T_N$ in $L^2 (\mathbb{R}^n)$ for some $N \in Sp 
(2(n+\ell+m))$ such that for every Gaussian state $\rho$ in $L^2 
(\mathbb{R}^n)$
$$T_N (\rho) = T_L (T_M (\rho)).$$
\end{theorem}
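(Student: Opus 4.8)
The plan is to fuse the two ancillas of $T_M$ and $T_L$ into a single ancilla of dimension $\ell+m$. By the construction in Example \ref{ex3.3} one may present both channels in the dilation form (\ref{eq3.3}), with the ancillas in the vacuum state:
\[
T_M(\rho)=\mbox{Tr}_2\,\Gamma(M^{-1})\bigl(\rho\otimes|e(\mathbf{0})\rangle\langle e(\mathbf{0})|\bigr)\Gamma(M^{-1})^{\dagger},\qquad T_L(\sigma)=\mbox{Tr}_2\,\Gamma(L^{-1})\bigl(\sigma\otimes|e(\mathbf{0})\rangle\langle e(\mathbf{0})|\bigr)\Gamma(L^{-1})^{\dagger},
\]
the vacua lying in $L^2(\mathbb{R}^m)$ and $L^2(\mathbb{R}^\ell)$ respectively. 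I would then work on $L^2(\mathbb{R}^{n+\ell+m})=L^2(\mathbb{R}^n)\otimes L^2(\mathbb{R}^m)\otimes L^2(\mathbb{R}^\ell)$, letting $\Gamma(M^{-1})$ act on the first two tensor factors and $\Gamma(L^{-1})$ on the first and third. Using $(\mbox{Tr}_2 X)\otimes Y=\mbox{Tr}_2(X\otimes Y)$, and that $\Gamma(L^{-1})$ (acting trivially on the $m$-factor) commutes with the partial trace over that factor, the composition collapses to a single partial trace over the combined $(\ell+m)$-mode ancilla:
\[
T_L\bigl(T_M(\rho)\bigr)=\mbox{Tr}_2\,\bigl[\,\Gamma(L^{-1})\,\Gamma(M^{-1})\bigl(\rho\otimes|e(\mathbf{0})\rangle\langle e(\mathbf{0})|\bigr)\Gamma(M^{-1})^{\dagger}\Gamma(L^{-1})^{\dagger}\bigr],
\]
where the vacuum now sits in $L^2(\mathbb{R}^m)\otimes L^2(\mathbb{R}^\ell)=L^2(\mathbb{R}^{m+\ell})$ by factorizability of the Weyl representation, and $\Gamma(M^{-1}),\Gamma(L^{-1})$ are understood as operators on the full space.

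The decisive step is to recognize $\Gamma(M^{-1})$ and $\Gamma(L^{-1})$ on this enlarged space as $\Gamma$ of genuine elements of $Sp(2(n+\ell+m))$. The operator acting as $\Gamma(M^{-1})$ on the $n$- and $m$-modes and trivially on the $\ell$-modes is $\Gamma$ of the matrix obtained by amplifying $M^{-1}$ with $I_{2\ell}$ on the extra modes and permuting the canonical coordinate pairs back to their natural order; that matrix is symplectic because a direct sum of symplectic matrices, followed by a permutation of the $(p_r,q_r)$-blocks, is symplectic --- this is exactly the row/column interchange device of Lemma \ref{lem1.5}. The same applies to $\Gamma(L^{-1})$. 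Since $K\mapsto\Gamma(K)$ is a projective unitary representation of $Sp(2(n+\ell+m))$, the product of the two amplified operators equals $\sigma\,\Gamma(N^{-1})$ for some $N\in Sp(2(n+\ell+m))$ and some scalar $\sigma$ of modulus one, and $|\sigma|^{2}=1$ drops out of the conjugation. Therefore the displayed expression becomes $\mbox{Tr}_2\,\Gamma(N^{-1})\bigl(\rho\otimes|e(\mathbf{0})\rangle\langle e(\mathbf{0})|\bigr)\Gamma(N^{-1})^{\dagger}$, which is the form (\ref{eq3.3}); so as a function of $\rho$ it is the symplectic Gaussian channel $T_N$ with $N\in Sp(2(n+\ell+m))$. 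Since the identity holds for an arbitrary state $\rho$, it holds in particular for every Gaussian $\rho$, which is all the theorem demands. One could also argue directly on Gaussian states through Theorem \ref{thm3.4}: composing the two covariance/mean actions yields one of the same shape with $N_{11}=M_{11}L_{11}$ and $\tfrac12 N_{21}^{T}N_{21}=\tfrac12 L_{11}^{T}M_{21}^{T}M_{21}L_{11}+\tfrac12 L_{21}^{T}L_{21}$, and these are precisely the $(1,1)$- and $(2,1)$-blocks of the symplectic matrix $N$ just produced.

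I expect the only real difficulty to be notational: keeping straight which tensor factor each operator occupies, and verifying that the amplify-and-reorder step sends $M^{-1}$ and $L^{-1}$ into $Sp(2(n+\ell+m))$, so that the group property of $\Gamma(\cdot)$ from Section \ref{sec:2} applies. That group property --- a composition of two symplectic dilations is again implemented by a single symplectic transformation --- is the one substantive ingredient, and it is already in hand; the inevitable projective phase is harmless since it cancels under conjugation. The rest is routine.
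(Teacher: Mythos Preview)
Your argument is correct, and it arrives at the same symplectic matrix as the paper: once you amplify $M^{-1}$ and $L^{-1}$ to act trivially on the complementary ancilla and multiply, you get exactly $N=\widetilde{M}\,\widetilde{L}$ with $\widetilde{L}=L\oplus I_{2m}$, $\widetilde{M}$ the analogous amplification of $M$ (after the block permutation you describe). The paper, however, does not touch the unitary dilation at all. It simply writes down $\widetilde{L}$, $\widetilde{M}$ and $N=\widetilde{M}\widetilde{L}$ as block matrices, reads off $N_{11}=M_{00}L_{00}$ and the first-column ancilla blocks, and checks directly from the formulas (\ref{eq3.4})--(\ref{eq3.8}) of Theorem~\ref{thm3.4} that the covariance and mean actions of $T_N$ coincide with those of $T_L\circ T_M$ on every Gaussian input. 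Your alternative paragraph at the end is precisely this computation.

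The trade-off: the paper's route is shorter and purely matrix-algebraic, needing no discussion of how $\Gamma(K)\otimes I$ sits inside $\Gamma$ of the enlarged symplectic group, no invocation of the projective multiplier, and no bookkeeping of tensor-factor orderings. Your route, on the other hand, actually proves the stronger statement that $T_N=T_L\circ T_M$ as channels on \emph{all} states (indeed on all of $\mathcal{B}(L^2(\mathbb{R}^n))$), not merely that they agree on Gaussian inputs; the paper's parametric check establishes only the latter, which is all the theorem claims. One small caution in your final line: it is $N_{21}^{T}N_{21}$ (not $N_{21}$ itself) that is determined by the displayed sum, but since the channel $T_N$ depends on $N_{21}$ only through this Gram combination, your identification is what is needed.
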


\begin{proof}
 Express the matrices $L$ and $M$ in block notation
$$L = \left [\begin{array}{cc} L_{00} & L_{01} \\ L_{10} & L_{11}  \end{array} 
\right ], \quad M = \left [\begin{array}{cc} M_{00} & M_{02} \\ M_{20} & M_{22} 
 \end{array} \right ] $$
where $L_{00}$ and $M_{00}$ are of order $2n \times 2n,$ $L_{11}$ is of order 
$2\ell \times 2 \ell$ and $M_{22}$ is of order $2m \times 2m.$ Define
\begin{eqnarray*}
\widetilde{L} = \left [\begin{array}{ccc} L_{00} & L_{01} & 0 \\ L_{10} & 
L_{11} & 0 \\ 0 & 0 & I_{2m}  \end{array}  \right ], && \widetilde{M} = \left 
[\begin{array}{ccc} M_{00} & 0 & M_{02} \\ 0 & I_{2 \ell} & 0 \\ M_{20} & 0 & 
M_{22}  \end{array} \right ],\\
N = \widetilde{M} \widetilde{L} &=& \left [\begin{array}{ccc} M_{00} L_{00} 
& M_{00} L_{01} & M_{02} \\ L_{10} & L_{11} & 0 \\ M_{20} L_{00} & M_{20} 
L_{01} & M_{22}   \end{array} \right ].
\end{eqnarray*}
Then $\widetilde{M}, \widetilde{L}$ and $N$ are all elements of $Sp 
(2(n+\ell+m)).$ Consider the symplectic Gaussian channel $T_N$ as described in 
Theorem \ref{thm3.4}. Then
\begin{equation}
T_N (\rho_g ({\boldsymbol \ell}, \mathbf{m}, S)) = \rho_g 
({\boldsymbol \ell}^{\prime}, 
\mathbf{m}^{\prime}, S^{\prime})    \label{eq3.9}
\end{equation}
where
\begin{eqnarray*}
S^{\prime} &=& (M_{00} L_{00})^T S \,\,M_{00}L_{00} + \frac{1}{2} (L_{10}^T 
L_{10} + (M_{20} L_{00})^T M_{20} L_{00})\\
&=& L_{00}^T \left \{M_{00}^T  S M_{00} + \frac{1}{2} M_{20}^T M_{20} \right 
\} L_{00} + \frac{1}{2} L_{10}^T L_{10}
\end{eqnarray*}
which is also the covariance matrix of the Gaussian state $T_L (T_M(\rho_g 
({\boldsymbol \ell}, \mathbf{m}, S))).$ Since $N$ is a symplectic dilation of 
$M_{00} L_{00},$ equations (\ref{eq3.6})-(\ref{eq3.8}) in Theorem \ref{thm3.4} 
imply that the momentum and position means ${\boldsymbol \ell}^{\prime}, 
\mathbf{m}^{\prime}$ in (\ref{eq3.9}) agree with the momentum and position 
means of $T_L (T_M(\rho_g({\boldsymbol \ell}, \mathbf{m}, S))).$ \hfill{\qed}
\end{proof}

\begin{corollary}\label{cor3.6}
Let $A,B$ be real $2n \times 2n$ matrices admitting symplectic dilations of 
order $2(n+\ell),$ $2(n+m)$ respectively. Then $AB$ admits a symplectic 
dilation of order $2(n + \ell + m).$
\end{corollary}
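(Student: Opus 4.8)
The plan is to read off Corollary \ref{cor3.6} as an immediate consequence of the composition theorem for symplectic Gaussian channels, Theorem \ref{thm3.5}, using the covariance formula \eqref{eq3.4}. The key observation is that the map $S \mapsto M_{11}^T S M_{11} + \tfrac12 M_{21}^T M_{21}$ attached to a symplectic dilation $M$ of a matrix $A$ has its \emph{linear part} governed precisely by the $(1,1)$-block $M_{11}$, which by hypothesis we may take to equal $A$. So the strategy is: translate the hypothesis "$A$ admits a symplectic dilation of order $2(n+\ell)$" into "there is a symplectic Gaussian channel $T_L$ with $L_{00}=A$", do the same for $B$, compose them via Theorem \ref{thm3.5}, and then extract from the resulting channel $T_N$ that its $(1,1)$-block is $AB$ and that $N$ is symplectic of order $2(n+\ell+m)$.

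First I would fix, by Definition \ref{def1.4}, a symplectic matrix $\widetilde{A}=\bigl[\begin{smallmatrix} A & B_A \\ C_A & D_A\end{smallmatrix}\bigr]\in Sp(2(n+\ell))$ and likewise $\widetilde{B}=\bigl[\begin{smallmatrix} B & B_B \\ C_B & D_B\end{smallmatrix}\bigr]\in Sp(2(n+m))$; here I rename the off-diagonal blocks so as not to clash with the matrices $A,B$ of the statement. In the notation of Theorem \ref{thm3.5} these play the roles of $L$ and $M$: set $L=\widetilde{A}$, so $L_{00}=A$ and $L$ is of order $2(n+\ell)$, and $M=\widetilde{B}$, so $M_{00}=B$ and $M$ is of order $2(n+m)$. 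Then $T_L$ and $T_M$ are symplectic Gaussian channels in $L^2(\mathbb{R}^n)$ in the sense of Theorem \ref{thm3.4}.

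Next I would apply Theorem \ref{thm3.5} verbatim: it produces $N\in Sp(2(n+\ell+m))$ with $T_N(\rho)=T_L(T_M(\rho))$ for every Gaussian state $\rho$. The point is simply to inspect the explicit $N$ constructed in that proof, namely
$$N = \widetilde{M}\widetilde{L} = \left[\begin{array}{ccc} M_{00}L_{00} & M_{00}L_{01} & M_{02} \\ L_{10} & L_{11} & 0 \\ M_{20}L_{00} & M_{20}L_{01} & M_{22}\end{array}\right],$$
whose leading $2n\times 2n$ block is $M_{00}L_{00}=B\cdot A$. Regrouping the middle and bottom block-rows and block-columns into a single block of size $2(\ell+m)$, one sees that $N$ has the block form $\bigl[\begin{smallmatrix} BA & * \\ * & *\end{smallmatrix}\bigr]$ with leading block of order $2n$, and $N\in Sp(2(n+\ell+m))$; hence by Definition \ref{def1.4}, $N$ is a symplectic dilation of $BA$ of order $2(n+\ell+m)$. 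Since $A$ and $B$ were arbitrary matrices admitting dilations of the stated orders, applying this with the roles of $A$ and $B$ interchanged (or noting the argument is symmetric in which factor is called $L$ and which $M$) yields that $AB$ itself admits a symplectic dilation of order $2(n+\ell+m)$, which is the assertion.

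The only genuinely substantive content is already carried by Theorem \ref{thm3.5}, so there is no real obstacle; the one point requiring a little care is bookkeeping, namely that $N$ as written is a $3\times 3$ block matrix whereas Definition \ref{def1.4} wants a $2\times 2$ block form with a distinguished leading $2n\times 2n$ corner — this is resolved by the trivial regrouping of the last two block-rows and block-columns, and by recording that the combined ancilla space has dimension $2(\ell+m)$. A second minor point is the swap between $AB$ and $BA$: since the corollary asks for $AB$, I would simply set $L=\widetilde{B}$ and $M=\widetilde{A}$ instead, so that $M_{00}L_{00}=A\cdot B$, and then $N$ is a symplectic dilation of $AB$ of order $2(n+\ell+m)$.
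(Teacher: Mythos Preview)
Your proposal is correct and follows essentially the same route as the paper: both extract the result directly from the explicit matrix $N=\widetilde{M}\widetilde{L}$ built in the proof of Theorem \ref{thm3.5}, whose leading $2n\times 2n$ block is $M_{00}L_{00}$. The paper simply writes ``set $A=M_{00}$, $B=L_{00}$'' (your final assignment), so your detour through $BA$ and the channel composition is unnecessary but harmless; the only content needed is that $N\in Sp(2(n+\ell+m))$ has the desired leading block.
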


\begin{proof}
 This is seen immediately from the proof of Theorem \ref{thm3.5} if we write $A=M_{00},$ 
$B=L_{00}.$ \hfill{\qed}
\end{proof}

\begin{remark}\label{rem3.7}
From equation (\ref{eq3.3}) we can easily write down a Kraus or operator sum 
representation of the symplectic Gaussian channel. Indeed, in  (\ref{eq3.3}) 
put $U = \Gamma (L)$ where $L \in Sp (2(n+k)).$ Consider the particle number 
basis $\{| r_1, r_2, \ldots, r_k \rangle, r_j \in \{ 0,1,2, \ldots, \} \forall 
\,j  \}$ in $L^2 (\mathbb{R}^k)$ when identified with the boson Fock space over 
$\mathbb{C}^k.$ Define the operators $U_{\mathbf{r}, \mathbf{s}}$ in $L^2 
(\mathbb{R}^n)$  by the identity
$$\langle \psi \otimes \mathbf{r}| U | \psi^{\prime} \otimes \mathbf{s} \rangle 
= 
\langle \psi | U_{\mathbf{r}, \mathbf{s}} | \psi^{\prime} \rangle \quad 
\forall \,\,\psi, \psi^{\prime} \in L^2 (\mathbb{R}^n).$$
Then
\begin{eqnarray*}
 T(\rho) &=& \mbox{Tr}_2 \,\,\Gamma (L) \,\,\left ( \rho \otimes 
|\mathbf{0}\rangle \langle \mathbf{0} |) \Gamma (L \right )^{\dagger}       
\\  
 &=&  \mbox{Tr}_2 \,\, U (\rho \otimes | \mathbf{0}\rangle \langle \mathbf{0} 
|) U^{\dagger}   )    \\   
 &=&  \sum_{r} \,\,U_{\mathbf{r}, \mathbf{0}} \,\,\rho\,\, U_{\mathbf{r}, 
\mathbf{0}}^{\dagger}    
\end{eqnarray*}
\end{remark}

\begin{example}[Quasifree channels\cite{3}, \cite{9}]\label{ex3.8}
This example is from the construction given by Heinosaari, Holevo and Wolf 
\cite{3}. To describe this we rewrite the Weyl operators $W(\mathbf{u})$ as 
$W(\mathbf{\xi})$ where  $\mathbf{\xi}^T = (\xi_1, \xi_2, \ldots, \xi_{2n}) = 
(\mbox{Re}\, u_1, \mbox{Im}\,u_1, \ldots,\mbox{Re}\, \,u_n,  \mbox{Im}\,\,u_n)$ 
for $\mathbf{u} \in \mathbb{C}^n.$ Then there exists a unital completely 
positive map $T_0$ on $\mathcal{B}(L^2(\mathbb{R}^n))$ satisfying
\begin{equation}
T_0 (W(\mathbf{\xi})) = e^{-\frac{1}{2} \mathbf{\xi}^{T} B 
\mathbf{\xi}} \,\,W(A \mathbf{\xi}) , \quad   \mathbf{\xi} \in 
\mathbb{R}^{2n}  \label{eq3.10}
\end{equation}
whenever $A$ and $B$ are $2n \times 2n$ real matrices, $B$ is symmetric and the 
matrix inequality
\begin{equation}
B + i (A^T J_{2n} A - J_{2n}) \ge 0   \label{eq3.11}
\end{equation}
holds with $J_{2n}$ given by (\ref{eq1.1}). The left hand side of 
(\ref{eq3.11}) is a complex hermitian matrix and (\ref{eq3.11}) implies that $B 
\ge 0.$

Now choose and fix $A,B$ as above and consider $T_0$ satisfying (\ref{eq3.10}). 
For any state $\rho$ in $L^2(\mathbb{R}^n)$ define the state $T(\rho)$ by
\begin{eqnarray}
\mbox{Tr}\,\,T(\rho) \,\,W(\mathbf{\xi}) &=&  \mbox{Tr}\,\, \rho \,\,T_0 (  
W(\mathbf{\xi})) \nonumber \\
&=& \mbox{Tr}\,\, \rho \,\,  W(A \mathbf{\xi}) e^{-\frac{1}{2} \mathbf{\xi}^{T} 
B \mathbf{\xi}} \quad \forall \,\,\mathbf{\xi}.  \label{eq3.12}
\end{eqnarray}
Then for any Gaussian state $\rho_g ({\boldsymbol \ell}, \mathbf{m}, S)$ we 
have 
from (\ref{eq2.7})
$$\mbox{Tr} \,\,T (\rho_g ({\boldsymbol \ell}, \mathbf{m}, S)) = \rho_g 
({\boldsymbol \ell}^{\prime}, \mathbf{m}^{\prime}, S^{\prime}) $$
where 
$$S^{\prime} = A^T SA + \frac{1}{2} B$$
and ${\boldsymbol \ell}^{\prime}, \mathbf{m}^{\prime}$ are given by
$$ A^T \left [\begin{array}{c} \ell_1 \\ -m_1 \\ \vdots \\ \ell_n \\ -m_n 
\end{array} \right ] = \left [\begin{array}{c} \ell_1^{\prime} \\ -m_1^{\prime} 
\\ \vdots \\ \ell_n^{\prime} \\ -m_n^{\prime} \end{array} \right ]$$
Thus $T$ is a Gaussian channel which changes the means and the covariance 
matrix exactly in the same manner as for the symplectic Gaussian channel $T_M$ 
of Theorem \ref{thm3.4} by writing $M_{11} = A$ and $M_{21}^{\dagger} M_{21} = 
B,$ associated with the symplectic matrix $M.$ We call the channel defined 
through (\ref{eq3.10}) and (\ref{eq3.11}), a {\it quasifree Gaussian channel}.

The inequality (\ref{eq3.12}) raises some questions concerning symplectic 
dilations. To any $2n \times 2n$ real matrix $A,$ associate the convex sets
\begin{eqnarray*}
 \mathcal{K}_n &=& \left \{ S | S \ge 0, \quad 2 S -i J_{2n} \ge 0 \right \}, \\
\mathcal{F}_n(A) &=& \left \{B | B \ge 0, A^T S A + \frac{1}{2} B \ge 0 \quad 
\forall \,\,S \in \mathcal{K}_n \right \}, \\
\mathcal{F}_n^0(A) &=& \left \{B |B \ge 0, i (A^T J_{2n} A - J_{2n}) + B \ge 0 
\right \}.
\end{eqnarray*}
By Theorem \ref{thm2.3}, $\mathcal{K}_n$ is the set of all $2n \times 2n$ 
covariance 
matrices of Gaussian states, $B \in \mathcal{F}_n (A)$ if and only if the 
affine tranformaton $S \rightarrow A^T S A + \frac{1}{2} B$ leaves 
$\mathcal{K}_n$ 
invariant and $\mathcal{F}_n^0(A)$ is the set of all $2n \times 2n$ positive 
definite matrices such that  $(A,B)$ defines a quasifree Gaussian channel. since
\begin{eqnarray*}
\lefteqn{2 \left (A^T \,\,\,S\,\,\,A + \frac{1}{2}B \right ) - i \,\,\,J_{2n}}\\
&=& A^T \left (2S - i\,\,\, J_{2n} \right )\,\,\, A + i \,\,\,\left (A^T 
\,\,\,J_{2n} A - J_{2n}\right ) + \,\,\, B
\end{eqnarray*}
it follows that $\mathcal{F}_n^0 (A) \subset \mathcal{F}_n (A).$ Is it true 
for every $B$ in $\mathcal{F}_n (A)$ there is a Gaussian channel with the 
property that it transforms a Gaussian state $\rho_g( {\boldsymbol \ell}, 
\mathbf{m}, S)$ to a Gaussian state with covariance matrix $A^T SA + 
\frac{1}{2} B?$ To any $B \in \mathcal{F}_n^0 (A)$ does there exist a 
symplectic dilation $\widetilde{A} = \left [\begin{array}{cc} A & P \\ Q & R 
\end{array} \right ]$ such that $B = Q^T Q?$ If this holds we can realize the 
quasifree channel associated with $(A,B)$ by a symplectic channel associated 
with $\widetilde{A}.$ If $\left [\begin{array}{cc} A & P \\ Q & R 
\end{array} \right ]$ is a symplectic matrix does $Q^T Q \in 
\mathcal{F}_n^0(A)?$ Finally, are there Gaussian channels not belonging to the 
semigroup generated by all reversible, bosonic, symplectic and quasifree 
Gaussian channels? It would be interesting to find answers to all the questions 
raised above. One would also like to have a description of the extreme points of 
$\mathcal{F}_n(A)$ and $\mathcal{F}_n^0(A).$ 
\end{example}

\vskip0.1in
\section*{Acknowledgement} This paper is an expanded version of a lecture 
delivered in the Kerala School of Mathematics, Kozhikode at a workshop and 
conference held in honour of Professor Kalyan B. Sinha on his 70th birthday 
during 7-14 February 2014. I thank the faculty and staff of KSOM for their warm 
hospitality in a scenic campus surrounded by hills and native trees. Part of 
this work was done at the Institute of Mathematical Sciences, Chennai where I 
enjoyed their warm hospitality and benefitted from several conversations with 
M. Krishna and R. Simon. I am grateful to Ajit Iqbal Singh for a careful 
reading of the manuscript and suggesting many improvements.   I thank my 
colleagues at the Delhi Centre of the Indian Statistical Institute for providing 
me a friendly research environment.


\end{document}